\newcolumntype{P}[1]{>{\centering\arraybackslash}p{#1}}
\newcommand{\dyp}{{\mathit{dp}}} 
\newtheorem{theorem}{Theorem}
\newtheorem{lemma}[theorem]{Lemma}
\newtheorem{conjecture}[theorem]{Conjecture}
\newtheorem{proposition}[theorem]{Proposition}
\newtheorem{corollary}[theorem]{Corollary}
\newcommand{\calP}{{\mathcal{P}}}
\newcommand{\calL}{{\mathcal{L}}}
\newcommand{\calI}{{\mathcal{I}}}
\newcommand{\ConeP}{{\sc C1P}\xspace}
\newcommand{\ConePrec}{$\textsc{C1P}^{\textsc{REC}}$\xspace}
\newcommand{\ConePswitch}{$\textsc{C1P}_{\textsc{SW}}$\xspace}
\newcommand{\ConePSwitchrec}{$\textsc{C1P}_{\textsc{SW}}^{\textsc{REC}}$\xspace}
\newcommand{\CConeP}{$\textsc{C1P}_{\medcirc}$\xspace}
\newcommand{\CConePrec}{$\textsc{C1P}_{\medcirc}^{\textsc{REC}}$\xspace}
\newcommand{\two}{two}
\newcommand{\Two}{Two}
\DeclareSymbolFont{symbolsC}{U}{pxsyc}{m}{n}
\DeclareMathSymbol{\medcirc}{\mathbin}{symbolsC}{7}
\newcommand{\SIcircle}{$\textsc{SI}_{\medcirc}$\xspace}
\newcommand{\kC}{$k$-{\sc C}\xspace}
\newcommand{\kCrec}{$k$-{\sc C}\textsuperscript{REC}\xspace}
\newcommand{\iverson}[1]{\left[{#1}\right]}
\DeclarePairedDelimiter\floor{\lfloor}{\rfloor}
\DeclarePairedDelimiter\ceil{\lceil}{\rceil}
\newcommand{\den}[1]{\left\llbracket{#1}\right\rrbracket}
\title{Voting in \Two-Crossing Elections}
\author{
Andrei Constantinescu\And
Roger Wattenhofer\\
\affiliations
ETH Z{\"u}rich\\
\emails
\{aconstantine, wattenhofer\}@ethz.ch
}
\begin{document}

\maketitle

\begin{abstract}
  We introduce \two-crossing elections as a generalization of single-crossing elections, showing a number of new results. First, we show that \two-crossing elections can be recognized in polynomial time, by reduction to the well-studied consecutive ones problem. We also conjecture that recognizing $k$-crossing elections is NP-complete in general, providing evidence by relating to a problem similar to consecutive ones proven to be hard in the literature. Single-crossing elections exhibit a transitive majority relation, from which many important results follow. On the other hand, we show that the classical Debord-McGarvey theorem can still be proven \two-crossing, implying that any weighted majority tournament is inducible by a \two-crossing election. This shows that many voting rules are NP-hard under \two-crossing elections, including Kemeny and Slater. This is in contrast to the single-crossing case and outlines an important complexity boundary between single- and two-crossing. Subsequently, we show that for \two-crossing elections the Young scores of all candidates can be computed in polynomial time, by formulating a totally unimodular linear program. Finally, we consider the Chamberlin--Courant rule with arbitrary disutilities and show that a winning committee can be computed in polynomial time, using an approach based on dynamic programming. 
\end{abstract}

\section{Introduction}\label{sec:intro}

Many impossibility results in social choice theory disappear if we assume restrictions on the voting preferences. The single-crossing domain is among the most studied restrictions in the literature. Not only does it make many social choice problems tractable, but it is also justifiable practically when placing both voters and candidates on a one-dimensional ``left-right'' spectrum. However, this one-dimensional model is often too restrictive, with real voting preferences hardly ever adhering to the single-crossing model. In this paper we wonder to what degree we can assume a more general model whilst preserving some of the theoretical and algorithmic properties of single-crossingness.

In multi-party democracies, one can often witness situations in which both the far left and the far right oppose against a change brought up by centrist parties. This was prominently happening during the Weimar Republic, but is common also today and known as ``unholy alliance'' or ``The enemy of my enemy is my friend.'' This generalized political model is known as the horseshoe theory; if the far left and the far right are actually closer to each other than to the center, the political line bends into a horseshoe, or even a circle.

An election is single-crossing if voters can be ordered into a list such that, as we sweep from left to right, the relative order between every pair of distinct candidates changes at most once. In this paper, we generalize the single-crossing property to a \two-crossing property, such that the relative order of every pair of candidates is allowed to change at most twice (Figure \ref{2-crossing-example}). As we shall see, this can directly model a horseshoe political system. In contrast to single-crossing preferences, there is evidence that \two-crossing preferences can accurately capture real voting behavior, particularly in economic preferences \cite{double_crossing}. It is worth noting that single-crossing preferences were also first introduced in similar economic settings \cite{mirrlees_income_tax,roberts_single_crossing}. Beyond these practical considerations, we believe that it is worth studying more general domain restrictions, and \two-crossing seems like a natural candidate. %Future work might consider $k$-crossing preferences for $k > 2$.
\begin{figure}
\begin{center}
\begin{tblr}{| *{7}{Q[c, 0.21cm]} |}
\hline
$v_1$ & $v_2$ & $v_3$ & $v_4$ & $v_5$ & $v_6$& $v_7$ \\
\hline
$\tikzmarknode{1_1_2}{1}$ & $\tikzmarknode{2_1_2}{2}$ & $\tikzmarknode{3_1_2}{3}$ & $\tikzmarknode{4_1_2}{3}$ & $\tikzmarknode{5_1_2}{3}$ & $\tikzmarknode{6_1_2}{3}$ & $\tikzmarknode{7_1_2}{1}$ \\
$\tikzmarknode{1_2_2}{2}$ & $\tikzmarknode{2_2_2}{1}$ & $\tikzmarknode{3_2_2}{2}$ & $\tikzmarknode{4_2_2}{4}$ & $\tikzmarknode{5_2_2}{4}$ & $\tikzmarknode{6_2_2}{1}$ & $\tikzmarknode{7_2_2}{3}$ \\
$\tikzmarknode{1_3_2}{3}$ & $\tikzmarknode{2_3_2}{3}$ & $\tikzmarknode{3_3_2}{1}$ & $\tikzmarknode{4_3_2}{2}$ & $\tikzmarknode{5_3_2}{1}$ & $\tikzmarknode{6_3_2}{4}$ & $\tikzmarknode{7_3_2}{4}$ \\
$\tikzmarknode{1_4_2}{4}$ & $\tikzmarknode{2_4_2}{4}$ & $\tikzmarknode{3_4_2}{4}$ & $\tikzmarknode{4_4_2}{1}$ & $\tikzmarknode{5_4_2}{2}$ & $\tikzmarknode{6_4_2}{2}$ & $\tikzmarknode{7_4_2}{2}$ \\
\hline
\end{tblr}\hspace{2cm}
\end{center}
\caption{\Two-crossing profile with $7$ voters and $4$ candidates. Each column, corresponding to a voter, lists alternatives in decreasing order of preference. Voters are given in a \two-crossing order; i.e.~any two of the four colored candidate trajectories cross at most twice.}
%cross *each other*.
\label{2-crossing-example}
\end{figure}
\begin{tikzpicture}[overlay,remember picture, shorten >=-3pt, shorten <= -3pt] % Shorten options make one and end (and the other) of each line longer by 3pt.
\draw[line width = 2.5mm, opacity = 0.2, color = blue] (1_1_2.west) -- (1_1_2.east) -- (2_2_2.west) -- (2_2_2.east) -- (3_3_2.west) -- (3_3_2.east) -- (4_4_2.west) -- (4_4_2.east) -- (5_3_2.west) -- (5_3_2.east) -- (6_2_2.west) -- (6_2_2.east) -- (7_1_2.west) -- (7_1_2.east);
\draw[line width = 2.5mm, opacity = 0.2, color = red] (1_2_2.west) -- (1_2_2.east) -- (2_1_2.west) -- (2_1_2.east) -- (3_2_2.west) -- (3_2_2.east) -- (4_3_2.west) -- (4_3_2.east) -- (5_4_2.west) -- (5_4_2.east) -- (6_4_2.west) -- (6_4_2.east) -- (7_4_2.west) -- (7_4_2.east);
\draw[line width = 2.5mm, opacity = 0.2, color = orange] (1_3_2.west) -- (1_3_2.east) -- (2_3_2.west) -- (2_3_2.east) -- (3_1_2.west) -- (3_1_2.east) -- (4_1_2.west) -- (4_1_2.east) -- (5_1_2.west) -- (5_1_2.east) -- (6_1_2.west) -- (6_1_2.east) -- (7_2_2.west) -- (7_2_2.east);
\draw[line width = 2.5mm, opacity = 0.2] (1_4_2.west) -- (1_4_2.east) -- (2_4_2.west) -- (2_4_2.east) -- (3_4_2.west) -- (3_4_2.east) -- (4_2_2.west) -- (4_2_2.east) -- (5_2_2.west) -- (5_2_2.east) -- (6_3_2.west) -- (6_3_2.east) -- (7_3_2.west) -- (7_3_2.east);
\end{tikzpicture}

Determining the winners of an election is perhaps the oldest and most fundamental problem in the field. In most single-winner voting systems, finding the winners is straightforward, with a few notable exceptions, for which the problem is at least NP-hard:~Young  \cite{rothe_young_completeness,brandt_weak_young_completeness,young_score_np_complete}, Dodgson \cite{kemeny_score_np_hard,hemaspaandra_theta2_hard_dodgson} and Kemeny \cite{kemeny_score_np_hard,kemeny_theta_2_p_hard}. For single-crossing preferences, all the aforementioned admit polynomial time algorithms, hinging essentially on the existence of (weak) Condorcet winners, which are easy to compute and, at least for an odd number of voters, actually give the winner straight away. For \two-crossing elections, the situation is more interesting, as we shall see.

When it comes to multi-winner rules, there are many prominent examples which are NP-hard, including one of the most studied, the Chamberlin--Courant rule for proportional representation, which is hard even when voters' dissatisfaction values follow very simple patterns \cite{prz08,lb11}. On the other hand, computing a winning committee is easy for single-crossing preferences \cite{cc_sc_poly_time_edith,constantinescu_elkind_2021}, but NP-hard for three-crossing preferences \cite{cc_np_hard_3_crossing}. Studying the \two-crossing case is required to close this gap.

\paragraph{Our Contribution.} We study \two-crossing elections from an axiomatic and algorithmic point of view. Axiomatically, we show that they are equally expressive to unrestricted elections in terms of the (weighted) majority tournament:~all weighted majority tournaments with same-parity weights are inducible by a \two-crossing profile. Consequently, Slater, Banks, Minimal Extending Set, Tournament Equilibrium Set, Kemeny and Ranked Pairs are all NP-hard under \two-crossing elections.
Algorithmically, we show how recognition can be achieved in polynomial time and study the winner determination problem for Young's and the Chamberlin--Courant rule, in both cases providing polynomial time algorithms. We also ask the question whether recognizing $k$-crossingness is NP-complete in general. We conjecture that the answer is yes for $k \geq 4$ and give evidence by relating to the work of \cite{goldberg}.

\section{Preliminaries}\label{sec:prelim}
Given integer $n$, we write $[n]$ for the set $\{1, \dots, n\}$;
given two integers $\ell \leq r$, we write $[\ell:r]$ to denote the set
$\{\ell, \dots, r\}$. For a fixed $n$ and $1 \leq \ell < r \leq n$ we write $[r : \ell]$ for the set $[1 : \ell] \cup [r : n]$. A subset $\calI \subseteq [n]$ is an {\em interval} if $\calI = [\ell : r]$ for some $1 \leq \ell \leq r \leq n$ and a {\em circular interval} if we also allow for $\ell > r$ in the preceding condition. For a statement $S$, we write $\iverson{S}$ for the Iverson bracket:~$[S] = 1$ if $S$ holds, and $0$ otherwise.

We consider a setting where voters $V = [n]$ express their preferences over a set of candidates (or alternatives) $C=[m]$. Voters rank candidates from best to worst,
so that the preferences of a voter $v$ are given by a linear order $\succ_v$:~given two distinct candidates $c, c'\in C$ we write $c\succ_v c'$
when $v$ prefers $c$ to $c'$. The list of all voters' preferences is denoted by $\calP = (\succ_v)_{v\in V}$, and is referred to as a {\em preference profile}.
% This is only used for Kemeny, which we don't define anymore.
We use $\calL(C)$ to denote the set of all linear orders over $C$, such that $\calP \in \calL(C)^n$.

\paragraph{Pairwise Majority and Young's Rule.}

Given a profile $\calP$ over candidate set $C$ and two candidates $c, c' \in C$ we write $n_{c, c'} = |\{i \in V : c \succ_i c'\}|$ for the number of voters who prefer $c$ to $c'$; the so-called \emph{majority margin} is then $m_{c, c'} = n_{c, c'} - n_{c', c}$.
Consider orienting the edges of the complete graph on $C$ such that $c \rightarrow c'$ iff $m_{c, c'} > 0$; leaving out edges with $m_{c, c'} = 0$. This construction is known as the \emph{majority tournament} of election $\calP$. If each edge $c \rightarrow c'$ is also given weight $m_{c, c'}$, then the resulting construction is known as the \emph{weighted majority tournament} of $\calP$. If there is a candidate $c \in C$ such that $m_{c, c'} > 0$ for all $c' \neq c$, then $c$ is called the \emph{strong Condorcet winner}.
If the previous condition is weakened to $m_{c, c'} \geq 0$, then $c$ is only a \emph{weak Condorcet winner}, which there can be multiple of.
For profile $\calP$, the \emph{strong (weak) Young score} of a candidate $c \in C$ is the minimum number of voters that need to be removed from $\calP$ such that $c$ becomes a strong (weak) Condorcet winner; in some cases this score will be infinite.
In the strong (weak) \emph{Young's rule} candidates with the smallest strong (weak) Young score are declared the winners. Traditionally, ``Condorcet winner'' means the strong variant, while ``Young's rule'' means the weak variant \cite{handbook_young_chapter}.

% This is good in a survey, so let's keep it here, but commenting out for space fitting requirements.
\iffalse
Problem \textsc{Young-Score} asks to decide, given profile $\calP$ and a value $k$, whether there exists a candidate with Young score at most $k$. Problem \textsc{Young-Winner} asks to decide, given $\calP$ and some candidate $c \in C$, whether $c$ is a Young winner for $\calP$;
\textsc{Strong-Young-Score} and \textsc{Strong-Young-Winner} are similar, but for the strong variants. Both \textsc{Winner} problems are $\Theta_2^p$-complete \cite{rothe_young_completeness,brandt_weak_young_completeness}; i.e.~complete for the class of problems solvable in polynomial time with parallel access to an NP oracle; which is a superset of NP. For the \textsc{Score} problems, according to \cite{young_score_np_complete}, no explicit NP-completeness proofs have been given, but the NP-completeness of \textsc{Strong-Young-Score} is implicit in the works of \cite{rothe_young_completeness}.
\fi

\paragraph{Tournament Solutions.}
Under some voting rules, knowledge of the (weighted) majority tournament is enough to compute the winners; we call such rules \emph{(weighted) tournament solutions}. Some tournament solutions allow for polynomial time winner determination, the most natural example being Copeland. However, for a significant number of tournament solutions it is NP-hard to determine the winners --- unweighted examples:~Slater, Banks, Minimal Extending Set, Tournament Equilibrium Set; weighted examples:~Kemeny, Ranked Pairs. See \cite{handbook_unweighted_tournaments_chapter,handbook_kemeny_chapter} for a survey on the topic.

In what follows, we assume that we are also given a {\em misrepresentation function}
$\rho: V\times C\to {\mathbb Q}$; $\rho$ is {\em consistent} with $\calP$
if $c\succ_v c'$ implies $\rho(v, c)\le \rho(v, c')$ for all $v\in V$ and $c, c'\in C$.
Intuitively, the value $\rho(v, c)$ indicates to what extent
candidate $c$ misrepresents voter $v$. 
A common example of a misrepresentation function is the {\em Borda function}
$\rho_B$ given by $\rho_B(v, c)=|\{c'\in C: c'\succ_v c\}|$;
this function assigns value $0$ to a voter's top choice, 
value $1$ to his second choice, and value $m-1$ to his last choice. We assume that operations on the values of $\rho(v, c)$ (e.g.~addition) can be performed in unit time; this assumption is realistic as the values of $\rho$ are usually small integers; e.g.~$\rho = \rho_B$.

\paragraph{The Chamberlin--Courant Rule.}
A {\em multiwinner voting rule} maps a profile $\calP$ over a candidate set 
$C$, and a positive integer $k \leq |C|$, to a non-empty collection 
of subsets of $C$ of size at most $k$;\footnote{Usually exactly $k$, but in our case the difference is immaterial.} the elements of this collection are called 
the {\em winning committees}. An {\em assignment function} is a mapping $w: V\to C$; for each $V'\subseteq V$
we write $w(V')=\{w(v): v\in V'\}$ for the image of $V'$ under $w$. If $|w(V)|\le k$, 
then $w$ is called a {\em $k$-assignment}. Given 
a misrepresentation function $\rho$ and
a profile $\calP=(\succ_v)_{v\in V}$, 
the {\em total dissatisfaction of voters in $V$ under a $k$-assignment $w$}
is given by $\Phi_\rho(\calP, w) = \sum_{v\in V}\rho(v, w(v))$. Intuitively, 
$w(v)$ is the representative of voter $v$ in the committee $w(V)$, 
and $\Phi_\rho(\calP, w)$ measures to what extent the voters are dissatisfied
with their representatives. An {\em optimal $k$-assignment}
for $\rho$ and $\calP$ is a $k$-assignment that minimizes $\Phi_\rho(\calP, w)$
among all $k$-assignments for $\calP$.
The {\em Chamberlin--Courant multiwinner voting rule} \cite{cc83,mw-chapter} maps each triple $(\calP, \rho, k)$ consisting of a preference profile $\calP=(\succ_v)_{v\in V}$
over a candidate set $C$, a misrepresentation function $\rho: V\times C\to{\mathbb Q}$ that is consistent with $\calP$, and a positive integer $k\le |C|$, to all sets $W$ such that $W=w(V)$ for some $k$-assignment $w$ that is optimal
% CC-Winner
for $\rho$ and $\calP$, constituting the winning committees.\footnote{Most literature does not make $\rho$ an input argument, each choice for $\rho$ making for a different rule. The implications of this distinction are minor for our work. In practice, $\rho$ can, for instance, take the form of an $n \times m$ matrix.}
It is NP-hard to determine whether a $k$-assignment of dissatisfaction $\Phi_\rho(\calP, w) \leq B$ exists for some input parameter $B$, even in the special cases where $\rho(v, c) \in \{0, 1\}$ \cite{prz08}, or if $\rho$ is the Borda function \cite{lb11}.

\paragraph{$k$-Crossing Preferences.}
A profile $\calP=(\succ_v)_{v\in V}$ over $C$ is {\em $k$-crossing}
if there is a permutation $(\sigma_i)_{i \in V}$ of $V$ such that for every pair of distinct
candidates $(c, c')\in C^2$ the number of indices $i \in [n - 1]$ such that $\iverson{c \succ_{\sigma_i} c'} \neq \iverson{c \succ_{\sigma_{i + 1}} c'}$ is at most $k$. That is, if we order the voters in $V$ according to $\sigma$ and traverse the list of voters from left to right, each pair of candidates `crosses' at most $k$ times. In this case we write that $\calP$ is \kC, with respect to $\sigma$.
A profile $\calP$ is {\em single-interval on a circle} if there is a permutation $(\sigma_i)_{i \in V}$ such that for all pairs of distinct candidates $(c, c') \in C^2$ the set $\{i \in V : c \succ_{\sigma_i} c'\}$ is a circular interval over $[n]$. If so, we write that $\calP$ is \SIcircle on the circle induced by $\sigma$. For $\Phi \in \{\text{\kC}, \text{\SIcircle}\}$ the \emph{recognition} problem $\Phi^\textsc{REC}$ asks:~given a preference profile $\calP$, is there a permutation $\sigma$ such that $\calP$ is $\calL$ with respect to $\sigma$? If yes, one is also interested in finding such a witnessing $\sigma$.

\begin{lemma} \label{lemma:2-cross-circle} For any permutation $(\sigma_i)_{i \in V}$, profile $\calP$ is \two-crossing with respect to $\sigma$ iff $\calP$ is single-interval on the circle induced by $\sigma$. Consequently, $\calP$ is \two-crossing iff $\calP$ is \SIcircle.
\end{lemma}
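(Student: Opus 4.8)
The plan is to fix an arbitrary permutation $\sigma$ and establish the equivalence \emph{pair by pair}, from which both assertions of the lemma follow at once: the first sentence is exactly the per-$\sigma$ statement, and the ``consequently'' is obtained by existentially quantifying the same witness $\sigma$ on both sides. Since being \two-crossing with respect to $\sigma$ reads ``for every ordered pair of distinct $(c,c')$ the number of crossings is at most $2$'' and being \SIcircle on the circle induced by $\sigma$ reads ``for every such pair the set $\{i : c \succ_{\sigma_i} c'\}$ is a circular interval'', it suffices to prove these two per-pair conditions equivalent. After relabelling voters so that $\sigma=\mathrm{id}$, I would fix a pair $(c,c')$ and encode it as the binary string $b_i = \iverson{c \succ_i c'}$ for $i\in[n]$, so that $\{i : c \succ_{\sigma_i} c'\}$ is precisely the set of positions where $b_i=1$.

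Next I would introduce two counts: the \emph{linear} transition count $T_{\mathrm{lin}} = \left|\{i \in [n-1] : b_i \neq b_{i+1}\}\right|$, which is exactly the number of crossings of the pair, and the \emph{circular} transition count $T_{\mathrm{circ}} = T_{\mathrm{lin}} + \iverson{b_1 \neq b_n}$, which additionally counts the wrap-around edge joining positions $n$ and $1$. Two elementary facts drive everything. First, $T_{\mathrm{circ}}$ is always \emph{even}, which I would prove by telescoping the relations $b_{i+1} = b_i \oplus \iverson{b_i \neq b_{i+1}}$ once around the cycle, forcing the XOR of all transition indicators (including the wrap bit) to vanish. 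Second, the set $\{i : b_i = 1\}$ is a circular interval iff $T_{\mathrm{circ}} \le 2$: each maximal cyclic run of $1$s contributes exactly two transitions, one entering and one leaving, so the $1$s form a single arc exactly when there is at most one run, the degenerate all-equal case giving $T_{\mathrm{circ}}=0$ and a trivial (empty or full) circular interval.

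The heart of the argument is then to show $T_{\mathrm{lin}} \le 2 \iff T_{\mathrm{circ}} \le 2$, i.e. that deleting the wrap-around edge does not move the threshold. If $b_1 = b_n$ the two counts coincide and there is nothing to do; if $b_1 \neq b_n$ then $T_{\mathrm{circ}} = T_{\mathrm{lin}} + 1$, and here the parity fact is exactly what rescues the count: since $T_{\mathrm{circ}}$ is even, $T_{\mathrm{lin}}$ is odd, so $T_{\mathrm{lin}} \le 2$ forces $T_{\mathrm{lin}} = 1$ and hence $T_{\mathrm{circ}} = 2$, while conversely $T_{\mathrm{circ}} \le 2$ with $T_{\mathrm{circ}} \ge 1$ and evenness gives $T_{\mathrm{circ}} = 2$, that is $T_{\mathrm{lin}} = 1$. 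The only configuration that could break the equivalence is $T_{\mathrm{lin}} = 2$ together with $b_1 \neq b_n$, which would yield $T_{\mathrm{circ}} = 3 > 2$; but parity forbids it, as $T_{\mathrm{lin}} = 2$ being even forces $b_1 = b_n$. I expect this reconciliation of the missing wrap-around edge through the parity of the cyclic transition count to be the main (indeed essentially the only) obstacle, the reduction to a single pair and the arc-counting observations being routine.
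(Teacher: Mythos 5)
Your proof is correct and follows essentially the same route as the paper's: both reduce the lemma to a per-pair statement about the transition pattern of the indicator sequence $\iverson{c \succ_{\sigma_i} c'}$ and then count runs/crossings. The paper handles the wrap-around by directly enumerating the block structure forced by at most two crossing positions, whereas you make the underlying evenness of the cyclic transition count explicit; this is a cleaner packaging of the same case analysis rather than a genuinely different argument.
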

\begin{proof} Consider two candidates $c \neq c'$. If $\{i \in V : c \succ_{\sigma_i} c'\}$ is a circular interval, then there are at most two crosses, one at each interval end. Conversely, assume there are two crosses (the cases with one/zero crosses are similar) at positions $1\leq i_1 < i_2 < n$ and that, without loss of generality, $\sigma_1$ prefers $c$ to $c'$. Then, voters in $[i_1 + 1 : i_2]$ prefer $c'$ to $c$ and voters in $[i_2 + 1 : i_1]$ prefer $c$ to $c'$, both being circular intervals.  
\end{proof}

\begin{lemma} Consider a horseshoe political system:~voters and candidates are assigned points on the unit circle, voters rank candidates in increasing order of circle arc-length distance from their assigned point, assume no ties. Then, voters' preferences are \two-crossing.
\end{lemma}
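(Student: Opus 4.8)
The plan is to reduce to the single-interval-on-a-circle characterization. By Lemma~\ref{lemma:2-cross-circle} it suffices to exhibit a circular order $\sigma$ of the voters witnessing that $\calP$ is \SIcircle; that is, an order in which, for every pair of distinct candidates $(c, c')$, the set $\{i \in V : c \succ_{\sigma_i} c'\}$ is a circular interval. The natural choice is to let $\sigma$ list the voters in increasing order of their angular position along the circle, breaking ties between coincident voters arbitrarily. Once $\sigma$ is fixed this way, the voters occupying any fixed arc of the circle form a contiguous block, i.e.\ a circular interval, so everything reduces to the geometric claim that, for each pair $(c, c')$, the set of circle points strictly closer to $c$ than to $c'$ is a single open arc.

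To prove that geometric claim I would first replace arc-length distance by Euclidean (chord) distance. On the unit circle the chord length between two points is $2\sin(\Delta/2)$, where $\Delta \in [0, \pi]$ is their geodesic angular separation, whereas the arc-length distance is exactly $\Delta$; since $t \mapsto 2\sin(t/2)$ is strictly increasing on $[0,\pi]$, the two distances induce the same order on points. Hence a voter prefers $c$ to $c'$ precisely when it is Euclidean-closer to $c$ than to $c'$, and the locus of such points in the plane is the open half-plane bounded by the perpendicular bisector of the chord $cc'$ and containing $c$.

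The key step is then to observe that the perpendicular bisector of a chord passes through the center of the circle, hence meets the circle in two antipodal points and splits it into exactly two arcs, one lying in each of the two half-planes. Consequently the points closer to $c$ are exactly those on the open arc contained in $c$'s half-plane, as desired; the no-ties assumption guarantees that no voter sits on the bisector, so each voter lands strictly inside one of the two arcs and $\succ_v$ is a genuine linear order. Applying this to every pair $(c, c')$ shows that each preference set is a circular interval under $\sigma$, so $\calP$ is \SIcircle and therefore \two-crossing.

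The main obstacle is the geometric claim of the middle paragraph; everything else is bookkeeping. The cleanest route is the perpendicular-bisector observation above, but if one prefers to argue directly in arc-length it can instead be done by locating the two equidistant points of $c$ and $c'$ (the midpoints of the two arcs they determine, which are antipodal) and checking, via a short case analysis on which side of each midpoint a point lies, that the preference is constant on each of the two resulting arcs.
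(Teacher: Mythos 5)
Your proof is correct and follows essentially the same route as the paper's: order voters by angular position, reduce to the \SIcircle characterization via Lemma~\ref{lemma:2-cross-circle}, and use the perpendicular bisector of the chord $cc'$ to show that the voters preferring $c$ form a circular arc. You simply spell out two details the paper leaves implicit (the order-equivalence of arc-length and chord distance, and that the bisector passes through the center and thus cuts the circle into exactly two arcs), which is fine.
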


\begin{proof}
Without loss of generality, assume the voters $V = [n]$ are ordered in increasing order of positive arc-length distance from some fixed point on the circle. Consider any pair of distinct candidates $(c, c') \in C^2$ and let $A_c$ and $A_{c'}$ be their assigned points on the circle. Then, voters with assigned points on one side of the perpendicular bisector of $A_cA_c'$ will prefer $c$ to $c'$ (and vice-versa for the other side). By the assumption on the ordering of $V$, these voters preferring $c$ to $c'$ will form a circular interval over $[n]$. Consequently, the voters' preferences are \SIcircle, so they are \two-crossing by Lemma \ref{lemma:2-cross-circle}.
\end{proof}

\paragraph{The Consecutive Ones Problem.} An $n \times m$ binary matrix $M$ has the {\em consecutive ones property} if its rows can be permuted such that in each column ones form a single continuous run. In this case, we say that $M$ is \ConeP. Generalizing, $M$ is $k$-\ConeP if its rows can be permuted such that in each column ones form at most $k$ continuous runs. Similarly, $M$ has the {\em circular consecutive ones property} if its rows can be permuted such that in each column ones form a single continuous run if we allow loop-around; we use \CConeP to denote this property. For $\Phi \in \{\text{\ConeP}, k\text{-\ConeP}, \text{\CConeP}\}$ we use $\Phi^\textsc{REC}$ to denote the corresponding recognition problems. It is instructive to note the following facts tying \ConeP and \CConeP.

\begin{lemma}[{\cite[Theorem 1]{tucker}}]\label{lemma_circle_is_line} From a binary matrix $M$ construct $M^c$ by flipping ones and zeros on those columns with a one in the first row of $M$.
Then, any permutation of rows $\sigma$ fixing the first row witnesses that $M$ is \CConeP iff $\sigma$ witnesses that $M^c$ is \ConeP.
\end{lemma}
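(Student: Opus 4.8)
The plan is to argue column by column, reducing everything to two clean facts about circular intervals on the cycle $[n]$. First, a set $S \subseteq [n]$ is a circular interval if and only if its complement $\overline{S}$ is a circular interval, since an arc and its complementary arc together tile the cycle. Second, if $1 \notin S$, then $S$ is a circular interval exactly when it is an ordinary interval, because any genuinely wrapping circular interval necessarily contains position $1$. I would record these two observations first, as they are the crux of the whole argument; the rest is bookkeeping once they are in hand.

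Next I would fix a permutation $\sigma$ with $\sigma(1) = 1$ and observe that, by construction, column complementation makes the first row of $M^c$ all zeros, and since $\sigma$ keeps the first row in position $1$, that row stays all zeros after permuting. For each column $j$, I then compare the set $S_j$ of rows carrying a one in column $j$ of the $\sigma$-permuted $M$ with the corresponding set $S_j'$ for the $\sigma$-permuted $M^c$, splitting into two cases according to the first-row entry of $M$ in column $j$. If that entry is $0$, column $j$ is untouched by the flip, so $S_j' = S_j$ and position $1$ lies outside both; by the second observation, $S_j$ is a circular interval iff $S_j' = S_j$ is an ordinary interval. If the entry is $1$, column $j$ is complemented, so $S_j' = \overline{S_j}$, and now $1 \in S_j$ but $1 \notin S_j'$; combining the first observation ($S_j$ is a circular interval iff $\overline{S_j} = S_j'$ is) with the second (as $1 \notin S_j'$, it is a circular interval iff an ordinary interval) again yields that $S_j$ is a circular interval iff $S_j'$ is an ordinary interval.

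Since in both cases the two conditions coincide for every column $j$, I conclude that all columns of the $\sigma$-permuted $M$ have their ones in a single circular run precisely when all columns of the $\sigma$-permuted $M^c$ have their ones in a single contiguous run; that is, $\sigma$ witnesses \CConeP for $M$ iff $\sigma$ witnesses \ConeP for $M^c$. The degenerate all-zeros and all-ones columns fall out of the same case analysis and need only a routine sanity check.

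I expect the only delicate point to be establishing the second observation cleanly, namely pinning down that a circular interval avoiding position $1$ must in fact be an ordinary interval, i.e.~that wraparound is impossible without passing through the fixed first row. This is exactly what the hypothesis $\sigma(1) = 1$ buys us, and it is the hinge that converts the circular property into the linear one; everything else follows from the complementation symmetry.
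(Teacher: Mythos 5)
Your proof is correct and rests on the same two facts as the paper's: complementing a column preserves the circular-interval property (and the set of witnessing permutations), and a circular interval avoiding the fixed first position must be an ordinary interval, which is exactly how the all-zero first row of $M^c$ converts \CConeP into \ConeP. Your column-by-column case split is just a more explicit bookkeeping of the same argument, so no further changes are needed.
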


\begin{proof}
Flipping a column of $M$ does not change whether it is \CConeP, so $M$ is \CConeP iff $M^c$ is \CConeP. Moreover, this transformation does not affect the set of witnessing permutations. By construction, $M^c$ has only zeroes on its first row, and we lose no generality by forcing a given row to be first for \CConeP, hence reducing to the non-circular variant \ConeP.
\end{proof}

\begin{corollary}\label{coro_lemma_circle_is_line}
\CConePrec and \ConePrec are equivalent (interreducible with respect to complexity-preserving reductions). This also extends to finding witnessing permutations.
\end{corollary}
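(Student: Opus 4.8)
The plan is to derive the corollary directly from Lemma~\ref{lemma_circle_is_line} by exhibiting two complexity-preserving reductions, one in each direction. The single extra ingredient I will need is the elementary observation that an \emph{all-zero row} is ``free'': it contributes no ones to any column, so it can be inserted at an arbitrary position of a \ConeP ordering (or a \CConeP arrangement) without affecting whether the remaining rows realize consecutive ones. Combined with the fact, already recorded in the proof of Lemma~\ref{lemma_circle_is_line}, that one loses no generality in forcing a designated row to be first when testing \CConeP, this lets me toggle between the two properties through the column-flip map $M \mapsto M^c$.

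For the reduction from \CConeP recognition to \ConeP recognition, given an instance $M$ I would compute $M^c$, whose first row is all zeros by construction. Since that all-zero row can be slid to the front of any \ConeP ordering for free, $M^c$ is \ConeP iff it is \ConeP via some permutation fixing its first row; invoking Lemma~\ref{lemma_circle_is_line} (and forcing the first row on the \CConeP side) then yields that $M$ is \CConeP iff $M^c$ is \ConeP. Hence one call to a \ConeP recognizer on $M^c$ decides \CConeP for $M$.

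For the reverse reduction from \ConeP recognition to \CConeP recognition, given an instance $N$ I would append a fresh all-zero row to obtain $N'$ and treat it as the distinguished first row. Because no column has a one in this row, the flip map does nothing, so $(N')^c = N'$; Lemma~\ref{lemma_circle_is_line} then gives that $N'$ is \CConeP iff $N'$ is \ConeP, while appending an all-zero row leaves \ConeP-ness unchanged, so $N'$ is \CConeP iff $N$ is \ConeP. A single \CConeP recognition call on $N'$ thus decides \ConeP for $N$.

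Finally, to extend both reductions to the task of \emph{finding} a witnessing permutation, I would transport witnesses along the permutation-level equivalence of Lemma~\ref{lemma_circle_is_line}: a \ConeP witness for $M^c$ (respectively, a \CConeP witness for $N'$) is first normalized so that the all-zero row sits first, and is then reread through the lemma as a \CConeP witness for $M$ (respectively, after deleting the padded row, a \ConeP witness for $N$). The only delicate point, and the one I expect to require the most care, is precisely this normalization: the lemma quantifies over permutations \emph{fixing the first row}, so I must check that relocating the all-zero row to the front never destroys consecutiveness, which is exactly what the ``free row'' observation guarantees.
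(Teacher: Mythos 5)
Your proof is correct and follows essentially the same route as the paper's: both directions go through Lemma~\ref{lemma_circle_is_line}, using the column-flip map $M \mapsto M^c$ one way and padding with an all-zero row the other, together with the observation that fixing a designated row first loses no generality for \CConeP. Your explicit ``free all-zero row'' normalization spells out a step the paper leaves implicit (that a \ConeP matrix with an all-zero row admits a witnessing order placing that row first), so if anything your write-up is slightly more careful.
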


\begin{proof}
Whenever a permutation witnessing that some matrix $M$ is \CConeP is cyclically permuted, the result also witnesses the property. Therefore, restricting to permutations fixing the first row loses no generality, so checking whether $M$ is \CConeP reduces to checking whether $M^c$ is \ConeP, by Lemma \ref{lemma_circle_is_line}. The converse also holds:~to check whether $M$ is \ConeP, add an extra row of zeroes and check for \CConeP.
\end{proof}

\noindent Recognizing \ConeP/\CConeP matrices and finding witnessing permutations can both be achieved in time linear in the size of the matrix \cite{booth_lueker}. A simpler $O(nm^2)$ algorithm was given by \cite{fulkerson_gross}. For $k \geq 2$, $k$-\ConePrec is NP-complete \cite{goldberg}.

\section{Recognizing $k$-Crossing Profiles}

The problem of recognizing single-crossing elections admits a number of polynomial time algorithms \cite{elkind_reco_sc,single_crossing_characterization}, but, to the best of our knowledge, recognizing $k$-crossing profiles for $k > 1$ has not been studied. We prove that recognizing \two-crossing profiles reduces to recognizing \ConeP matrices, which is tractable in poly-time. We leave the problem open for $k > 2$. However, for $k \geq 4$ we conjecture that recognition is NP-complete.
\begin{conjecture} \kCrec is NP-complete for $k \geq 4$.
\end{conjecture}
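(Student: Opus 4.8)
The plan is to reduce $k$-crossing recognition to a purely matrix-theoretic problem and then connect the latter to the $k$-\ConeP hardness of \cite{goldberg}. First I would note that $k$-crossingness is controlled entirely by the per-pair cross counts, and that an \emph{arbitrary} binary matrix can be realized as the pair-comparison columns of a profile. Concretely, given an $n \times m'$ binary matrix $M$, I would introduce $2m'$ candidates grouped into pairs $(a_j, b_j)$, one pair per column, and let each voter $v$ (a row of $M$) rank the pairs in a fixed block order $a_1/b_1 \succ a_2/b_2 \succ \cdots$, placing $a_j$ above $b_j$ exactly when $M[v,j] = 1$. Then every cross-block comparison is constant across voters, hence contributes zero crosses under any permutation of voters, while the within-pair column of $(a_j, b_j)$ is precisely column $j$ of $M$. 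Thus the profile is $k$-crossing with respect to a voter permutation $\sigma$ if and only if $\sigma$ makes every column of $M$ have at most $k$ crosses, giving a clean poly-time equivalence between $k$-crossing recognition and the matrix question ``can the rows of $M$ be permuted so that each column has at most $k$ crosses?''

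Second, I would relate crosses to the number of one-runs counted by $k$-\ConeP. A column with at most $t$ runs of ones has at most $2t$ crosses, so if $M$ is $t$-\ConeP then the associated profile is $2t$-crossing; taking $t = 2$, the smallest NP-complete case of $k$-\ConePrec, this already maps every \textsc{yes}-instance of $2$-\ConeP to a \textsc{yes}-instance of $4$-crossing recognition. This is the source of the threshold $k \geq 4$: two runs of ones correspond to four crosses, whereas a single run (ordinary \ConeP) is polynomial, which is why I would expect $k = 3$ to remain the genuinely ambiguous intermediate case and leave it open. The correspondence is cleanest for even $k = 2t$ with $t \geq 2$; odd $k$ I would expect to require an additional boundary gadget.

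The hard part, and the reason the statement is only a conjecture, will be the converse direction. Cross counts and one-run counts disagree by boundary effects: a column that both begins and ends with a one, such as $10101$, has three runs of ones but only four crosses, so it is admissible for $4$-crossing yet forbidden for $2$-\ConeP. Consequently the naive reduction can send a \textsc{no}-instance of $2$-\ConeP to a \textsc{yes}-instance of $4$-crossing, breaking correctness. To repair this one would want to force, under the optimal voter permutation, every column to both start and end with a zero, whereupon crosses equal twice the number of one-runs and the two notions coincide. The obstacle is that the voter permutation is entirely unconstrained, so standard padding tricks (e.g.\ appending all-zero rows, in the spirit of Corollary~\ref{coro_lemma_circle_is_line}) fail to pin the boundaries: an all-zero row placed beside a boundary one can even \emph{increase} the cross count, so the adversary choosing $\sigma$ has no incentive to position it as we need. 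Designing a gadget that penalizes boundary-touching runs under every row permutation is the key missing step, and I would regard such a gadget as the route from this evidence to a full NP-completeness proof.
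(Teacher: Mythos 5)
Your proposal follows essentially the same route as the paper: you build the same two-candidates-per-column profile $\calP_M$ with a fixed block order across pairs, reduce \kCrec to the question of permuting rows so that each column has at most $k$ switches (the paper's $k$-\ConePswitch property, Lemma \ref{lemma_reduction}), relate this to the $k$-\ConeP hardness of \cite{goldberg}, and correctly pinpoint the boundary-run discrepancy (runs versus crosses differing by up to one at each end) as the reason the converse reduction, and hence a full NP-completeness proof, is still missing. This matches the paper exactly, which likewise leaves the hardness of the switch-counting problem as Conjecture \ref{conj} and establishes only the conditional Theorem \ref{first_conditional_theorem}.
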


\subsection{Polynomial Time Recognition for $k = 2$}

Given a profile $\calP$ over candidate set $C$, let $M_\calP$ be a binary matrix with $n$ rows and $m(m - 1)$ columns:~one row for each voter $v \in V$ and one column for each pair of distinct candidates $(c, c') \in C^2$; such that $M_\calP[v, (c, c')] = \iverson{c \succ_{v} c'}$.

\begin{lemma} \label{lemma:sc-circle-vs-matrix}For any permutation $(\sigma_i)_{i \in V}$, $\calP$ is single-interval on the circle induced by $\sigma$ iff $M_\calP$ is \CConeP with respect to $\sigma$. Consequently, $\calP$ is single-interval on a circle iff $M_\calP$ is \CConeP.
\end{lemma}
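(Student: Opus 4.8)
The plan is to show that the two sides of the equivalence are literally the same condition once unfolded through the construction of $M_\calP$, so the proof is a column-by-column translation of definitions rather than a substantive argument.

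First I would fix an arbitrary permutation $(\sigma_i)_{i \in V}$ and consider $M_\calP$ with its rows reordered so that the row in position $i$ is the preference row of voter $\sigma_i$. The key observation is that, in this reordered matrix, the entry in position $i$ of the column indexed by $(c, c')$ equals $\iverson{c \succ_{\sigma_i} c'}$. Consequently, the set of row positions carrying a one in column $(c, c')$ is exactly $\{i \in V : c \succ_{\sigma_i} c'\}$.

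Next I would match the two definitions. By definition, $M_\calP$ is \CConeP with respect to $\sigma$ iff, after the above reordering, the ones in every column form a single continuous run allowing loop-around, i.e.~a circular interval over $[n]$. By the previous observation, column $(c, c')$ has this property iff the set $\{i : c \succ_{\sigma_i} c'\}$ is a circular interval. Ranging over all ordered pairs of distinct candidates $(c, c') \in C^2$, and using that the columns of $M_\calP$ are in bijection with these pairs, this collection of conditions is exactly the requirement that $\calP$ be single-interval on the circle induced by $\sigma$. This establishes the per-$\sigma$ equivalence.

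The ``consequently'' part then follows by quantifying existentially over $\sigma$: $\calP$ is \SIcircle iff some $\sigma$ witnesses it, iff that same $\sigma$ witnesses that $M_\calP$ is \CConeP, iff $M_\calP$ is \CConeP. I do not expect a genuine obstacle here; the only point needing care is confirming that the circular-interval notion used in the definition of \SIcircle coincides with the loop-around run of ones defining \CConeP, which is immediate from the conventions fixed in the Preliminaries.
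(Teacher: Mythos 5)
Your proposal is correct and matches the paper's own proof essentially verbatim: both arguments fix $\sigma$, identify the ones in column $(c,c')$ of the reordered matrix with the set $\{i \in V : c \succ_{\sigma_i} c'\}$, observe that the circular-interval condition on this set is exactly the circular-run condition on the column, and then quantify over all pairs (and, for the second claim, over $\sigma$). No differences worth noting.
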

\begin{proof} Consider two candidates $c \neq c'$. The requirement on $(c, c')$ for \SIcircle on the circle induced by $\sigma$ to hold is that the set $\{i \in V : c \succ_{\sigma_i} c'\}$ is a circular interval over $[n]$. This is equivalent to saying that its indicator function $\iverson{c \succ_{\sigma_i} c'}$ has all ones grouped into a single circular run, which is, by definition of $M_\calP$, the same as saying that column $(c, c')$ of $M_{\calP}$ reordered by $\sigma$ has all ones in a single circular interval. Quantifying over all $c \neq c'$ gives the conclusion.
\end{proof}

\begin{theorem} \label{theorem_reco} Given a preference profile $\calP$, deciding whether it is \two-crossing and, if affirmative, finding a witnessing permutation can be done in time $O(nm^2)$. 
\end{theorem}

We note that the construction for $M_\calP$ was first suggested by \cite{single_crossing_characterization}, but their result is somewhat different:~they show that $\calP$ is single-crossing iff $M_\calP$ is \ConeP , while we show that $\calP$ is \two-crossing iff $M_\calP$ is \CConeP (i.e.~$M_\calP^c$ is \ConeP). The relationship between consecutive ones and $k$-crossingness seems to go deeper than this. Say $\calP$ is $k$-interval on a circle if voters can be rearranged into a circle such that for all pairs of distinct candidates $(c, c') \in C^2$ the set $\{i \in V : c \succ_{i} c'\}$ is a union of at most $k$ circular intervals. Similarly, say a matrix is $k$-\CConeP if its rows can be rearranged such that each column has at most $k$ circular runs of ones. With these conventions, one can follow the proof of Lemma \ref{lemma:2-cross-circle} to also show that $\calP$ is $2k$-crossing iff $\calP$ is $k$-interval on a circle. Moreover, one can follow the proof of Lemma \ref{lemma:sc-circle-vs-matrix} to show that $\calP$ is $k$-interval on a circle iff $M_\calP$ is $k$-\CConeP.

\subsection{Is Recognition NP-Complete for $k > 2$\,?}

In this section we make progress towards showing that recognizing $k$-crossing elections is NP-complete starting with some value of $k$. Recall that for $k \leq 2$ the problem is tractable, while for $k > 2$ it is open. Our work is inspired by the following result.

\begin{theorem}[{\cite[Theorem 5.2]{goldberg}}]\label{goldberg_np_complete}
$k\text{-\ConePrec}$ is NP-complete for $k \geq 2$.
\end{theorem}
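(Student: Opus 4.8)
The plan is to prove both directions of NP-completeness. Membership in NP is immediate: a witness is a row permutation $\sigma$, and one checks in time $O(nm)$ that, after reordering the rows of $M$ by $\sigma$, each of the $m$ columns breaks into at most $k$ maximal runs of ones. So $k$-\ConePrec $\in$ NP, and all the content is in hardness. It suffices to establish hardness for $k = 2$ and then lift the result to every $k \geq 2$; for the base case I would reduce from a canonical NP-complete ordering problem, \textsc{Hamiltonian Path} on an undirected graph $G = (V, E)$.

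The reduction rests on one structural observation about the $2$-block constraint. Fix a column with zero-set $Z$ (the rows holding a $0$). Reading the reordered column left to right, the ones split into $1 + (\text{number of maximal runs of zeros}) - [\text{first row} \in Z] - [\text{last row} \in Z]$ maximal runs of ones. Hence ``at most two runs of ones'' is equivalent to: \emph{the rows of $Z$ occupy at most one contiguous block, unless $Z$ reaches an end of the ordering}, which relaxes the bound by one for each touched end. Thus, away from the two boundary positions, a column is exactly a \emph{consecutiveness constraint} on its zero-set, while the two ends provide a controlled ``escape''. I would neutralise the escape by introducing a large block of \emph{guard} rows together with columns that pin the guards to the two extremities, so that all ``real'' rows are forced into the strict interior; each real column then becomes a genuine demand that its zero-set be consecutive, exactly as in ordinary \ConeP.

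With this primitive I would encode \textsc{Hamiltonian Path}: create one real row per vertex (besides the guards) and install columns whose consecutiveness demands are jointly satisfiable by a single interior ordering if and only if the vertices can be laid out along a path of $G$, i.e.\ so that order-consecutive vertices are $G$-adjacent. In the forward direction a Hamiltonian path is read directly off any feasible ordering; in the reverse direction one builds the ordering by walking the path and verifies every column has at most two runs. Finally, to pass from $k = 2$ to general $k \geq 2$, I would append a rigid padding gadget on fresh rows and columns that is $1$-\ConeP-forced into a fixed shape and deliberately contributes exactly $k - 2$ extra, unavoidable runs of ones to every original column, so that the residual budget for the encoded instance is again two.

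The main obstacle is the gadget-correctness argument, and specifically the reverse direction: I must show that \emph{every} feasible row permutation — not merely the intended one — induces a Hamiltonian path. This needs a canonical-form lemma asserting that the guard columns genuinely confine the guards to the ends and pin the real rows into a line, and that no column can covertly exploit a boundary escape or a spurious second block to satisfy its constraint in an unintended way. Making the guard construction tight enough to rule out all such degenerate orderings while keeping the instance of polynomial size is where the real difficulty lies; by comparison, the NP-membership, the boundary-run bookkeeping, and the $k \geq 2$ padding are routine.
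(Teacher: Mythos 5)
This statement is not proved in the paper at all: it is imported verbatim from Goldberg et al.\ (their Theorem 5.2), so there is no in-paper proof to compare your attempt against; your proposal has to stand on its own merits, and unfortunately its core does not.

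The genuine gap is that your reduction plan is self-defeating. You propose to use guard rows and guard columns to pin all ``real'' rows into the strict interior, so that (by your --- correct --- run-counting identity) each real column's two-run condition degenerates into the pure constraint ``the zero-set of this column occupies consecutive positions,'' and then to encode \textsc{Hamiltonian Path} as a conjunction of such consecutiveness demands. But a conjunction of constraints of the form ``this set of rows must be consecutive'' is exactly an instance of \ConePrec (put a one in the rows of each constrained set and recall that \ConePrec is solvable in linear time via PQ-trees \cite{booth_lueker}). So if your guard gadget works precisely as intended, every instance your reduction produces lies in a polynomial-time-decidable family, and such instances cannot encode an NP-hard problem unless P $=$ NP. Concretely, the step ``install columns whose consecutiveness demands are jointly satisfiable by a single interior ordering iff the vertices can be laid out along a path of $G$'' is unachievable: joint satisfiability of consecutiveness demands is polynomial, while \textsc{Hamiltonian Path} is NP-hard. (A symptom of this: the natural attempt, requiring each closed neighborhood $N[v]$ to be consecutive, characterizes unit interval graphs, in which Hamiltonian paths are easy to find.) The hardness for $k \geq 2$ must come from exactly the freedom your guards are designed to destroy --- the disjunctive choice of how a column's ones split into two blocks, equivalently whether and how its zero-set exploits an end of the ordering --- rather than from pure consecutivity. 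Your NP-membership argument, the boundary-run bookkeeping, and the padding idea for lifting $k=2$ to general $k$ are all fine, but the heart of the $k=2$ hardness argument fails as designed and would need to be rebuilt around gadgets whose satisfying orderings genuinely branch, which is where the actual proof of Goldberg et al.\ does its work.
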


\noindent We say a binary matrix $(M_{ij})_{i \in [n], j \in [m]}$ has property $k$-\ConePswitch if there is a permutation $\sigma$ of the rows of $M$ such that for all columns $j \in [m]$ there are at most $k$ positions $i \in [n - 1]$ such that $M_{\sigma_ij} \neq M_{\sigma_{i + 1}j}$. In other words, on each column of the permuted matrix there are at most $k$ ``switches'' from zero to one or vice-versa. We write $k$-\ConePSwitchrec for the associated recognition problem.
For building intuition, note that any $k$-\ConeP matrix is also $2k$-\ConePswitch, but not the other way around. Similarly, any $k$-\ConePswitch matrix is also ${\left(1 + \floor*{k/2}\right)}$-\ConeP, but not the other way around. Given these, one can see an informal ``almost equality'' relation between $k$-\ConeP and {$2k$-\ConePswitch}, modulo $\pm1$ variations in $k$, so it is natural to also expect that {$k$-\ConePSwitchrec} is NP-complete for $k$ above a certain threshold, by reduction from some $k'$-\ConePrec. Despite this similarity and our best efforts, we leave finding such a reduction open for now.

\begin{conjecture} \label{conj} $k$-\ConePSwitchrec is NP-complete for $k \geq 4$.
\end{conjecture}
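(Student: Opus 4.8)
The plan is to avoid searching for a direct reduction from $k'$-\ConePrec into the switch problem, and instead route through the \emph{circular} consecutive ones property, where hardness transfers almost for free from the linear case. The engine is an exact identity at the level of matrices: for any fixed row permutation $\sigma$, a budget of $2k$ linear switches per column is attainable iff that column has at most $k$ circular runs of ones. The reason is a parity observation. Traversing a single column all the way around the circle, the number of adjacencies that switch value is always \emph{even}; passing to a linear arrangement simply deletes one of these adjacencies (the wrap), so the linear switch count is either the circular count or one less. Since a column has at most $k$ circular runs precisely when its circular switch count is at most $2k$, a column fits in $2k$ linear switches iff it has at most $k$ circular runs, and the \emph{same} $\sigma$ witnesses both. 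Hence $M$ is $2k$-\ConePswitch iff $M$ is $k$-\CConeP, so $2k$-\ConePSwitchrec and $k$-\CConePrec are literally the same decision problem.

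It then suffices to show $k$-\CConePrec is NP-complete for $k\ge2$. Membership in NP is immediate, since a permutation is a polynomial-size certificate whose circular run counts are checkable in linear time. For hardness I would reuse, verbatim, the all-zero-row construction from the proof of Corollary~\ref{coro_lemma_circle_is_line}, observing that it reduces $k$-\ConePrec to $k$-\CConePrec for \emph{every} $k$, not just $k=1$: given an instance $M$ of $k$-\ConeP, append one row that is zero in every column. A row that is simultaneously zero everywhere is a \emph{safe cut point}---rotating it to the top of the circle and cutting just below the bottom row splits no run of ones---so the circular and linear run counts of every column coincide under that cut, giving that $M$ is $k$-\ConeP iff the augmented matrix is $k$-\CConeP. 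Combining this with Theorem~\ref{goldberg_np_complete} ($k$-\ConePrec is NP-complete for $k\ge2$) and the identity above shows $2k$-\ConePSwitchrec is NP-complete for all $k\ge2$, i.e.\ for every \emph{even} threshold at least $4$. In particular the headline threshold $k=4$ (coming from $2$-\ConePrec) is fully settled.

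The remaining case is the odd thresholds $2k+1$, and this is exactly where I expect the difficulty to concentrate. The clean identity degrades by one: a column whose top and bottom entries \emph{differ} spends only $2(k+1)-1=2k+1$ switches on $k+1$ circular runs, so an odd budget can secretly purchase an extra run. Feasibility for $2k+1$ switches is therefore not simply ``at most $k+1$ circular runs''; it is ``at most $k+1$ circular runs, \emph{together with} the global constraint that every column attaining $k+1$ runs has a switch edge at the single shared cut position of the linear arrangement.'' It is this synchronization requirement that blocks a one-line transfer. The natural remedy is to force every content column to have equal (zero) top and bottom entries, by appending buffer gadgetry that pins all-zero rows to both extremes of any switch-economical arrangement, so that the parasitic extra switch is never available and $2k+1$ collapses back to $2k$. \textbf{Main obstacle:} designing buffer rows (and possibly columns) that \emph{provably} occupy both ends of every arrangement meeting the switch budget, and proving this cannot be circumvented by sliding the buffers into the interior, is the crux for odd $k$; nothing in the circular-C1P argument is hard, and the even thresholds---hence the threshold $k=4$ that the conjecture singles out---are already delivered by the identity and the all-zero-row reduction.
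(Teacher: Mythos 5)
This statement is a conjecture that the paper itself does not prove --- the surrounding text explicitly says ``we leave finding such a reduction open for now'' --- so there is no reference proof to compare against, and your attempt must be judged on its own terms. Your central identity, that a fixed row permutation witnesses $2k$-\ConePswitch if and only if it witnesses $k$-\CConeP, is exactly the paper's unlabelled lemma following Conjecture \ref{conj_2} (proved there by ``double-inclusion following the definitions''), and your parity justification of it is correct. Your reduction of $k$-\ConePrec to $k$-\CConePrec by appending an all-zero row also appears sound for every $k$: in one direction the appended row is a cut point, so no wrap-around run is created; in the other, rotating the circle so that the all-zero row comes first turns every circular run into a linear run, and deleting that row changes no run count. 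Combined with Theorem \ref{goldberg_np_complete}, this would settle the paper's Conjecture \ref{conj_2} (which the authors also leave open) and hence NP-completeness of $k$-\ConePSwitchrec for all \emph{even} $k \geq 4$ --- a genuine step beyond what the paper establishes, generalizing the $k=1$ trick from Corollary \ref{coro_lemma_circle_is_line}.

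However, the conjecture asserts NP-completeness for \emph{every} $k \geq 4$, and your argument proves nothing for odd $k$. You correctly diagnose why the identity degrades there: a budget of $2k+1$ linear switches admits columns with $k+1$ circular runs, provided the parasitic extra switch falls on the single shared cut position of the linear arrangement, which is a global synchronization constraint rather than a per-column one. But the proposed remedy --- buffer rows provably pinned to both extremes of every budget-respecting arrangement --- is only a sketch, and you yourself flag that ruling out arrangements that slide the buffers into the interior is the unresolved crux. Until that gadget is constructed and verified, the statement as written ($k \geq 4$, not ``even $k \geq 4$'') remains unproven. A further caveat: your even case inherits all of its hardness from Theorem \ref{goldberg_np_complete} exactly as quoted, so before claiming Conjecture \ref{conj_2} as a theorem one should confirm that the cited result of Goldberg et al.\ is indeed the per-column, fixed-$k$ statement the paper attributes to it.
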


\noindent However, we give the other piece of the puzzle:~a reduction from $k$-\ConePSwitchrec to \kCrec, so that only a proof of Conjecture \ref{conj} would suffice to show NP-completeness for \kCrec. For the reduction, consider a binary matrix ${M = (M_{ij})_{i\in [n], j \in [m]}}$. From $M$ one can construct a preference profile $\calP_M$, as follows:~voters $V = [n]$ are in one-to-one correspondence with the rows; for each column $j \in [m]$ introduce two candidates $c^1_j$, $c^2_j$ such that $C = \{c^\ell_j : \ell \in [2], j \in [m]\}$. Voters rank candidates corresponding to different columns consistently, and in a way which does not depend on $M$ itself; in particular, we set $c^{\ell_1}_{j_1} \succ_{v} c^{\ell_2}_{j_2}$ for all $v \in V$, $\ell_1, \ell_2 \in [2]$ and $1 \leq j_1 < j_2 \leq m$. The only freedom we are now left with in profile $\calP_M = (\succ_v)_{v \in V}$ is in specifying for each $i \in [n]$ and $j \in [m]$ whether $c^1_j \succ_{i} c^2_j$ or vice-versa. To settle this, we arrange so that $\iverson{c^1_j \succ_{i} c^2_j} = M_{ij}$. 

\begin{lemma} \label{lemma_reduction}\label{coro:kswitch-kc} For any permutation of rows $\sigma$, $M$ is $k$-\ConePswitch with respect to $\sigma$ iff $\calP_M$ is \kC with respect to $\sigma$. Consequently, $M$ is $k$-\ConePswitch iff $\calP_M$ is \kC.
\end{lemma}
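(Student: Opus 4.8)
The plan is to fix an arbitrary permutation $\sigma$ and reduce the biconditional to a pair-by-pair comparison of crossing counts. The key structural observation is that the candidates of $\calP_M$ split naturally into $m$ groups of two, one group $\{c^1_j, c^2_j\}$ per column, and that the construction makes the relative order of any two candidates lying in \emph{different} columns the same for every voter. I would therefore partition the ordered pairs $(c, c') \in C^2$ into \emph{cross-column} pairs (those with $c, c'$ from distinct columns) and \emph{same-column} pairs (the pairs $(c^1_j, c^2_j)$ and $(c^2_j, c^1_j)$), and count crosses separately in each case.

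First, for a cross-column pair, say $c^{\ell_1}_{j_1}$ and $c^{\ell_2}_{j_2}$ with $j_1 \neq j_2$, the construction fixes $\iverson{c^{\ell_1}_{j_1} \succ_v c^{\ell_2}_{j_2}}$ to the same value for every $v$ (namely $1$ iff $j_1 < j_2$), so this indicator is constant along $\sigma$ and the pair contributes zero crosses, irrespective of $\sigma$. Hence cross-column pairs never violate the \kC condition and can be ignored. Second, for the same-column pair $(c^1_j, c^2_j)$ we have $\iverson{c^1_j \succ_{\sigma_i} c^2_j} = M_{\sigma_i j}$ by construction, so the number of $i \in [n-1]$ at which this indicator changes is exactly the number of switches of column $j$ under $\sigma$; the reversed pair $(c^2_j, c^1_j)$ has the complementary indicator (as $\succ_{\sigma_i}$ is a linear order) and thus an identical cross count. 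Consequently, the crossing count of the only ``active'' candidate pairs matches, column-by-column, the switch count of $M$.

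Putting these together, $\calP_M$ is \kC with respect to $\sigma$ --- i.e.\ every pair crosses at most $k$ times --- iff every column of $M$ has at most $k$ switches under $\sigma$, which is precisely the statement that $M$ is $k$-\ConePswitch with respect to $\sigma$. The ``consequently'' part is then immediate: since the equivalence holds for each fixed $\sigma$, existentially quantifying over $\sigma$ on both sides yields that $M$ is $k$-\ConePswitch iff $\calP_M$ is \kC. I expect no serious obstacle here, as the argument is essentially a bijection between columns and same-column candidate pairs that preserves switch/cross counts; the only points requiring care are the bookkeeping over \emph{ordered} pairs (ensuring the redundant reversed pairs are correctly seen to match) and verifying explicitly that cross-column pairs contribute nothing, which is exactly where the consistent inter-column ordering baked into the construction of $\calP_M$ is used.
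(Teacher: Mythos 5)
Your proposal is correct and follows essentially the same route as the paper's proof: dismiss cross-column pairs as contributing zero crosses due to the voter-independent inter-column ordering, then identify the cross count of each pair $(c^1_j, c^2_j)$ with the switch count of column $j$ via $\iverson{c^1_j \succ_{\sigma_i} c^2_j} = M_{\sigma_i j}$, and quantify over $j$ and then over $\sigma$. The only difference is that you explicitly note the reversed same-column pairs have identical cross counts, a minor bookkeeping point the paper leaves implicit.
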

\begin{proof} Note that, irrespective of $\sigma$, by construction no pair of candidates $(c_{j_1}^{\ell_1}$, $c_{j_2}^{\ell_2})$ with $j_1 \neq j_2$ can generate any crosses, so we can, without loss of generality, restrict the definition of \kC to consider only pairs of the form ${(c_{j}^1, c_{j}^2)}$ for $j \in [m]$. Consider some fixed $j \in [m]$, then the number of crosses induced by $(c_{j}^1$, $c_{j}^2)$ for $\sigma$ is given by ${|\{i \in [n - 1] : \iverson{c_{j}^1 \succ_{\sigma_i}c_{j}^2} \neq \iverson{c_{j}^1 \succ_{\sigma_{i + 1}}c_{j}^2}\}|}$, which is, by construction, the same as ${|\{i \in [n - 1] : M_{\sigma_ij} \neq M_{\sigma_{i + 1}j}\}|}$, the latter being the number of switches on column $j$ of $M$ when reordered by $\sigma$. Since the number of crosses induced by $(c_j^1, c_j^2)$ and the number of switches on column $j$ match, we get the conclusion by quantifying over all $j \in [m]$ whilst recalling that $j \in [m]$ are in bijection both with columns of $M$ and with pairs of candidates $(c_{j}^1$, $c_{j}^2)$ in $\calP_M$.
\end{proof}

\begin{theorem} \label{first_conditional_theorem} Assuming Conjecture \ref{conj}, \kCrec is NP-complete for $k \geq 4$. 
\end{theorem}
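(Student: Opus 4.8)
The plan is to establish the two halves of NP-completeness separately, with the bulk of the content being a direct appeal to Lemma~\ref{lemma_reduction} for hardness. Membership in NP is the routine half; NP-hardness reduces, under Conjecture~\ref{conj}, to verifying that the construction $M \mapsto \calP_M$ is a valid polynomial-time many-one reduction.

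For membership, I would take a witnessing permutation $\sigma$ of the $n$ voters as the NP-certificate. This certificate has size polynomial in the input, and given $\sigma$ one can verify in polynomial time that $\calP$ is \kC with respect to $\sigma$: for each of the $\binom{m}{2}$ unordered candidate pairs, scan the voters in the order dictated by $\sigma$ and count the number of positions $i \in [n-1]$ at which the relative order of the pair flips, checking that this count never exceeds $k$. This runs in time $O(nm^2)$, so \kCrec lies in NP.

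For hardness, I would reduce from $k$-\ConePSwitchrec, which Conjecture~\ref{conj} posits to be NP-complete for $k \geq 4$. Given a binary matrix $M$, construct the profile $\calP_M$ exactly as defined before Lemma~\ref{lemma_reduction}; this construction is clearly computable in polynomial time, as $\calP_M$ has $n$ voters and only $2m$ candidates, and each voter's preference order is prescribed explicitly. Lemma~\ref{lemma_reduction} then yields precisely the equivalence we need: $M$ is $k$-\ConePswitch iff $\calP_M$ is \kC, i.e.~$M$ is a yes-instance of $k$-\ConePSwitchrec iff $\calP_M$ is a yes-instance of \kCrec. Hence the map $M \mapsto \calP_M$ is a polynomial-time many-one reduction, and \kCrec inherits NP-hardness for every $k \geq 4$.

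Combining the two halves gives NP-completeness of \kCrec for $k \geq 4$. There is no real mathematical obstacle left at this stage: all the difficulty has been pushed into Lemma~\ref{lemma_reduction} and, crucially, into the unproven Conjecture~\ref{conj}. The only points requiring genuine care are the mundane bookkeeping ones --- confirming that the certificate and the reduction are genuinely polynomial-sized and polynomial-time, and that the ``iff'' of Lemma~\ref{lemma_reduction} matches yes-instances on both sides without a shift in the parameter $k$ (the lemma is stated for the same $k$ on both problems, so no off-by-one arises).
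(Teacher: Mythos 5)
Your proof is correct and follows exactly the same route as the paper: membership in NP via a witnessing permutation certificate, and NP-hardness by reducing from $k$-\ConePSwitchrec through the construction $M \mapsto \calP_M$ and Lemma~\ref{lemma_reduction}. The paper's version is just terser, leaving the certificate-checking and polynomial-time bookkeeping implicit.
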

\begin{proof} \kCrec is clearly in NP. To show NP-hardness we reduce from $k$-\ConePSwitchrec. Given an instance $M$ of $k$-\ConePSwitchrec, the reduction constructs $\calP_M$ in polynomial time. By Lemma \ref{coro:kswitch-kc}, $M$ is $k$-\ConePswitch iff $\calP_M$ is \kC, giving the conclusion.
\end{proof}

For even values of $k$, one can also consider a second proof strategy, based on the following conjecture.

\begin{conjecture} \label{conj_2} $k$-\CConePrec is NP-complete for $k \geq 2$.
\end{conjecture}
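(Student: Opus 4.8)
The plan is to establish Conjecture \ref{conj_2} by a polynomial-time many-one reduction from $k$-\ConePrec, which Theorem \ref{goldberg_np_complete} guarantees to be NP-complete for every $k \geq 2$. Membership of $k$-\CConePrec in NP is immediate: a witnessing row permutation $\sigma$ serves as a certificate, since counting the circular runs of ones in each column of the reordered matrix takes time linear in the size of the matrix. Thus the whole difficulty lies in the hardness reduction.

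First I would lift to arbitrary $k$ the ``append a row of zeros'' idea that already underlies the $k = 1$ equivalence proved in Corollary \ref{coro_lemma_circle_is_line}. Given an instance $M$ of $k$-\ConePrec, the reduction outputs the matrix $M'$ obtained by appending a single all-zero row; this is clearly computable in polynomial time. The heart of the argument is the claim that $M$ is $k$-\ConeP if and only if $M'$ is $k$-\CConeP, which I would prove at the level of individual witnessing permutations, in the same per-$\sigma$ style as the earlier lemmas of the paper.

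For the forward direction, given a linear witness $\sigma$ for $M$, I would arrange the rows of $M'$ on a circle in the order prescribed by $\sigma$, inserting the all-zero row at the seam between the last and first rows of $\sigma$. Appending a zero row creates no new ones, and placing it at the seam prevents the two ``end'' runs of any column from fusing around the circle; hence every column of $M'$ has at most as many circular runs as it had linear runs under $\sigma$, i.e.~at most $k$. For the converse, given a circular witness for $M'$, I would cut the circle open at the all-zero row, reading off a linear order on the rows of $M$. Because that row carries a zero in every column, no run of ones occupies its position, so cutting there neither splits nor merges any run; the number of circular runs in each column thus equals the number of linear runs in the cut-open order, which is at most $k$. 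The resulting order therefore witnesses that $M$ is $k$-\ConeP.

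The step I expect to be most delicate is the run-count bookkeeping inside this claim: one must check simultaneously that a single zero row both separates the wrap-around runs in the forward direction and splits no run in the converse direction, uniformly in $k$. It is worth stressing that one should \emph{not} try to generalize the column-flipping transformation of Lemma \ref{lemma_circle_is_line}, since flipping a column preserves the single-run structure only when $k = 1$ and scrambles run counts once $k \geq 2$. The zero-row reduction avoids this pitfall, and because it need only be run-count-preserving in the direction from $k$-\ConeP to $k$-\CConeP, it suffices to transfer Goldberg's hardness and settle the conjecture.
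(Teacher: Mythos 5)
The paper offers no proof of this statement: Conjecture \ref{conj_2} is posed as an open problem, introduced only as a weaker target than Conjecture \ref{conj}, so there is no in-paper argument to compare yours against. Judged on its own terms, your reduction looks correct to me, and I cannot find a gap in it. NP membership is routine, and the core biconditional --- $M$ is $k$-\ConeP iff $M$ with an appended all-zero row is $k$-\CConeP --- holds for exactly the reasons you give. In the forward direction the zero row sits at the seam and prevents the two boundary runs of any column from merging around the circle, so circular run counts equal the linear ones. In the converse direction, circular run counts are invariant under cyclic rotation, so one may rotate the circular witness until the zero row is last and then cut there; since that row is zero in every column, no circular run contains its position, hence none wraps around, and maximal circular runs coincide with maximal linear runs on the remaining $n$ positions. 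This is precisely the $k$-generalization of the ``add an extra row of zeroes'' direction already used in the proof of Corollary \ref{coro_lemma_circle_is_line}, and you are right that it is the column-flipping construction of Lemma \ref{lemma_circle_is_line}, not the zero-row padding, that fails to generalize beyond $k=1$.

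Two caveats. First, your closing remark that the map ``need only be run-count-preserving in the direction from $k$-\ConeP to $k$-\CConeP'' is imprecise: a many-one reduction requires the full equivalence, and your argument does establish both directions, so this is a slip of phrasing rather than of substance, but it should be restated. Second, the argument is only as strong as Theorem \ref{goldberg_np_complete} as quoted. If that citation is accurate, your reduction settles Conjecture \ref{conj_2}; moreover, combined with the paper's unnumbered lemma identifying $k$-\CConeP with $2k$-\ConePswitch and with Lemma \ref{lemma_reduction}, it would also settle the even cases of Conjecture \ref{conj} and make the conclusion of Theorem \ref{first_conditional_theorem} unconditional for even $k$, i.e.\ NP-completeness of \kCrec for even $k \geq 4$. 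Given that the authors state they searched for such a reduction without success, you should verify the exact statement of Theorem 5.2 in the cited work before claiming the result --- but I see no error in the reduction itself.
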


\noindent Observe that this is weaker than Conjecture \ref{conj}, because of the following fact.

\begin{lemma} For $k \geq 1$ it holds that $k\text{-\CConeP} = 2k\text{-\ConePswitch}$.
\end{lemma}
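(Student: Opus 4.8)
The plan is to prove the two properties coincide by fixing a single row permutation $\sigma$ and showing, column by column, that $\sigma$ witnesses $k$-\CConeP exactly when it witnesses $2k$-\ConePswitch; since both properties quantify over the same set of permutations and both are of the form ``for every column, a bound holds'', this per-$\sigma$, per-column equivalence yields the claimed equality (and, as in Lemmas \ref{lemma:2-cross-circle}, \ref{lemma:sc-circle-vs-matrix} and \ref{lemma_reduction}, it even preserves the witnessing permutation). So fix $\sigma$ and a column $j$. For the circular arrangement induced by $\sigma$, let $r$ be the number of maximal circular runs of ones in column $j$, let $s_\circ$ be the number of \emph{cyclic} switches --- cyclically adjacent pairs with differing entries, including the wrap-around pair $(\sigma_n, \sigma_1)$ --- and let $s_{\mathrm{lin}}$ be the number of \emph{linear} switches, the quantity capped in the definition of \ConePswitch, i.e.\ $s_\circ$ with the wrap-around pair omitted. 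I would first record two elementary facts: (i) for a non-constant column, runs of ones and runs of zeros alternate around the circle, so $s_\circ = 2r$, while a constant column has $r \le 1$ and $s_\circ = 0$; and (ii) $s_{\mathrm{lin}} = s_\circ - \iverson{M_{\sigma_n j} \neq M_{\sigma_1 j}}$, so $s_{\mathrm{lin}} \le s_\circ \le s_{\mathrm{lin}} + 1$.

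The forward direction is then immediate: if $\sigma$ witnesses $k$-\CConeP, then $r \le k$ on every column, so $s_\circ \le 2k$ by (i), whence $s_{\mathrm{lin}} \le s_\circ \le 2k$ by (ii), and $\sigma$ witnesses $2k$-\ConePswitch. The converse is the interesting direction and rests on a parity observation: traversing the circle once, the value must switch an even number of times in order to return to its starting value, so $s_\circ$ is always even (this is also visible from (i), since $s_\circ \in \{0, 2r\}$). Now suppose $\sigma$ witnesses $2k$-\ConePswitch, so $s_{\mathrm{lin}} \le 2k$ on every column. By (ii) we get $s_\circ \le s_{\mathrm{lin}} + 1 \le 2k + 1$, and combining this with the parity of $s_\circ$ forces $s_\circ \le 2k$; then (i) gives $r = s_\circ / 2 \le k$ for non-constant columns, while constant columns satisfy $r \le 1 \le k$ trivially. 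Hence $\sigma$ witnesses $k$-\CConeP, completing the equivalence.

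The one genuinely delicate point --- and the step I expect to need the most care --- is the $\pm 1$ bookkeeping between $s_{\mathrm{lin}}$ and $s_\circ$ in the converse: the naive bound $s_\circ \le 2k+1$ is one too large, and it is exactly the evenness of $s_\circ$ that recovers the sharp conclusion $r \le k$ rather than $r \le k+1$. I would therefore be careful with the two corner cases where an off-by-one could creep in, namely constant columns (no switches at all) and the distinction between $\sigma_1, \sigma_n$ agreeing versus differing on column $j$. It is also worth remarking that this correspondence, with its factor of two and its restriction to an even switch-budget $2k$, is precisely why Conjecture \ref{conj_2} is strictly weaker than Conjecture \ref{conj}: routing hardness of $k$-\CConePrec through this lemma only yields $2k$-\ConePSwitchrec instances, leaving the odd switch-budgets of Conjecture \ref{conj} untouched.
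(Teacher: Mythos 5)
Your proof is correct and follows the only natural route: the paper's own proof is literally the one-liner ``double-inclusion following the definitions,'' and your argument is exactly that double inclusion spelled out, with the alternating-runs identity $s_\circ = 2r$ and the parity of $s_\circ$ supplying the detail the paper leaves implicit. The parity step closing the gap from $s_\circ \le 2k+1$ to $s_\circ \le 2k$ is indeed the one point where care is needed, and you handle it (and the constant-column corner case) correctly.
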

\begin{proof} Double-inclusion following the definitions.%
\end{proof}

Note that we choose not to make any claims about the hardness of $3$-\ConePSwitchrec, which unfortunately means no conjecture about the hardness of recognizing three-crossing elections. For all we know, this might be polynomial, or it might just as well be NP-hard.

\section{Weighted Majority Tournaments}

Single-crossing elections are attractive from an axiomatic standpoint:~the majority relation is transitive and a Condorcet winner exists, for odd $n$. The Condorcet Paradox profile consists of $3$ voters, so it can be easily seen as \two-crossing. Therefore, Condorcet winners might not exist under \two-crossing. This begs the question, can we guarantee anything about the weighted majority tournament of a \two-crossing election? The following result, similar to \cite{dominik_spoc} for elections single peaked on a circle, answers negatively.

\begin{theorem}[Debord-McGarvey for \Two-Crossing Elections] \label{mcgarvey}Any weighted majority tournament with weights of the same parity%
\footnote{Nonexistent edges are considered to have weight $0$.}
is inducible by a \two-crossing election. If $W$ denotes the maximum weight of an edge, then $O(m^2W)$ voters suffice. 
\end{theorem}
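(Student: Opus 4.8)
The plan is to invoke Lemma~\ref{lemma:2-cross-circle} and instead build a profile that is \SIcircle, since this is equivalent to being \two-crossing. The key reformulation is that on a circle of $n$ voters the margin of an ordered pair $(c,c')$ equals $2a_{c,c'}-n$, where $a_{c,c'}$ is the number of voters on the arc assigned to ``$c\succ c'$''; the same-parity hypothesis is exactly what makes every $a_{c,c'}$ an integer. I would first construct a fixed \emph{base} profile that is \SIcircle, has all margins $0$, and is \emph{reverse-symmetric}: traverse a reduced sequence of adjacent transpositions (a wiring diagram) from the reference order $1\succ\dots\succ m$ to its reverse along the first half of the circle, and place the pointwise complement of each order at the antipodal point on the second half. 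Complementation turns adjacent transpositions into adjacent transpositions, so the second half is again a valid reduced sequence running from the reverse order back to the reference order; hence each pair crosses exactly once per half, i.e.\ exactly twice in total, and every arc is perfectly balanced. This base uses $m(m-1)$ voters.

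The heart of the argument is a gadget that raises a single margin $m_{c,c'}$ by $2$ while leaving all others untouched \emph{and} preserves the \SIcircle property. Here I would exploit the only operation that is guaranteed to keep a profile \SIcircle, namely duplicating an order that already appears as a region of the circle (thickening a region keeps every arc contiguous); indeed one can argue this is essentially the only \SIcircle-preserving insertion. The classical McGarvey pair for $(c,c')$ consists of two orders that agree on $(c,c')$ but disagree on every other pair, so adding both shifts only $m_{c,c'}$, by $+2$. The crucial observation is that the reverse-symmetric base already contains such a pair: if $\alpha,\beta$ are the two adjacent regions straddling the crossing of $(c,c')$ in the first half (so $\alpha,\beta$ differ only by swapping the adjacent candidates $c,c'$), then $\alpha$ and the complement $\overline{\beta}$ are both regions of the base, they both rank $c\succ c'$, and they disagree on every other pair precisely because $\beta$ agreed with $\alpha$ off $(c,c')$. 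Thus duplicating the regions $\alpha$ and $\overline{\beta}$ realizes a McGarvey pair for $(c,c')$ without ever leaving the \SIcircle setting.

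To finish, I would assemble the target tournament edge by edge. Because the gadget for the reverse edge $(c',c)$, obtained symmetrically from $\overline{\alpha},\beta$, lowers $m_{c,c'}$ by $2$, each margin can be pushed up or down in steps of $2$. When the target weights are even I start from the all-zero base and apply $\lvert w_{c,c'}\rvert/2$ gadgets per edge; when they are odd I first duplicate one arbitrary region, which makes $n$ odd and every margin $\pm1$, and then correct each margin by an even amount exactly as before. Summing, the base contributes $m(m-1)$ voters and the gadgets contribute $O(m^2W)$, giving the claimed bound.

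I expect the main obstacle to be precisely the tension resolved in the second paragraph: an arbitrary McGarvey-style perturbation destroys \SIcircleness, while the only safe edits, region duplications, seem too weak to move one margin in isolation. The decisive idea is that reverse symmetry forces the two halves of the circle to supply, for every edge, a complementary pair of \emph{pre-existing} regions that is exactly a McGarvey pair; verifying that this pair leaves all other margins fixed, and that thickening regions keeps all arcs contiguous, are the two points that require care but are routine once the skeleton is in place.
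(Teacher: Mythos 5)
Your proposal is correct and is essentially the paper's own argument recast in the \SIcircle language of Lemma~\ref{lemma:2-cross-circle}: the paper's ``Double-BubbleSort'' base is exactly your reverse-symmetric circle instantiated with one particular reduced word, and its perturbing voter pair $\tau_i = Acc'B$, $\tau_j = \overline{B}cc'\overline{A}$ is precisely your pair $\alpha$, $\overline{\beta}$ of pre-existing regions. The only cosmetic differences are that the paper keeps one endpoint duplicated (so its base has $m(m-1)+1$ voters and all margins equal to $1$, handling even weights by appending an extra $m\ldots 21$ voter), whereas you start from the balanced base of size $m(m-1)$ and add one voter to reach the odd case.
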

\begin{proof} For some number of candidates $m$, consider the following ``Double-BubbleSort'' construction of a \two-crossing profile. There will be $m(m - 1) + 1$ voters; voter $1$ ranks ${1 \succ 2 \succ \ldots \succ m}$, which we more succinctly represent as the permutation $1 2 3\ldots m$. Voter $2$ ranks $2 1 3\ldots m$, voter $3$ ranks $2 3 1 \ldots m$, and so on, voter $m$ ranks $2 3 \ldots m 1$. In essence, one swap at a time, candidate $1$ went from best to worst. For the following $m - 1$ voters, candidate $2$ will go from best to second worst, one position at a time:~$3 2 4 \ldots m 1$, $3 4 2 \ldots m 1$, \ldots, $3 4 \ldots m 2 1$. In the following rounds, we similarly bring from front to back candidates $3, 4, \ldots m - 1$, each taking multiple swaps to reach their final position. Note that the swaps we do are precisely those done by a BubbleSort algorithm sorting in descending order. This construction is illustrated for $m = 4$ by the first $7$ voters in Figure \ref{2-crossing-mcgarvey}. To complete the profile, we need $m(m - 1)/2$ additional voters; consecutive voters will, once again, differ by a single adjacent swap. In particular, we go from $m\ldots21$ back to $12\ldots m$ by following the swaps of a BubbleSort algorithm sorting in ascending order, but iterating over the permutation in reverse order. Figure \ref{2-crossing-mcgarvey}, voters $v_7$ to $v_{13}$,  demonstrates this process. Note that the majority margins satisfy $m_{c, c'} = 1$ for $c < c'$. 

Why is this profile interesting? For any two candidates $c \neq c'$ there are voters $v_i$, $v_j$ with preference permutations $\tau_i = Acc'B$ and $\tau_j = \overline{B}cc'\overline{A}$; where $\overline{X}$ denotes the reverse of permutation $X$. This means that, if we add one additional copy of voters $v_i$, $v_j$ to our profile, then all majority margins stay unchanged, except for $m_{c, c'}$ and $m_{c', c}$, which increase/decrease by $2$. Since $c$, $c'$ were arbitrary, this means that we can increase/decrease any majority margin by $2$ without affecting the others,%
\footnote{Subject to the implicit constraint that $m_{c, c'} + m_{c', c} = 0$.}
by adding two additional voters. If weights are odd, this can be done repeatedly until our profile has margins agreeing to the weights in the tournament. Fixing the margin for a weight $w \leq W$ 
% Removing this almost makes it fit.
requires $O(w)$ additions of voters, so overall we need at most $O(m^2W)$ voters. The case of even weights is similar:~add an additional $m\ldots 2 1$ voter to our original profile, and then proceed analogously.
\end{proof}

The previous result shows that, essentially, the computational complexity of all known (weighted) tournament solutions is unchanged from the general case by restricting to \two-crossing elections. Thus, we have the following.

\begin{corollary}
Determining the winners for Slater, Banks, Minimal Extending Set, Tournament Equilibrium Set, Kemeny and Ranked Pairs is NP-hard under \two-crossing elections.
\end{corollary}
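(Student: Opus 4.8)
The corollary follows almost immediately from the Debord-McGarvey theorem for two-crossing elections (Theorem~\ref{mcgarvey}) combined with the known NP-hardness of these tournament solutions on \emph{arbitrary} elections. The plan is to exploit the fact that each of the listed rules is a (weighted) tournament solution, meaning its winners depend only on the (weighted) majority tournament. The core observation is that hardness on general elections transfers to the two-crossing domain precisely because Theorem~\ref{mcgarvey} shows the two domains are equally expressive at the level of the weighted majority tournament.

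First I would recall the literature establishing that Slater, Banks, Minimal Extending Set, and Tournament Equilibrium Set are NP-hard to evaluate on unweighted tournaments, and that Kemeny and Ranked Pairs are NP-hard on weighted tournaments; these are cited in the preliminaries via \cite{handbook_unweighted_tournaments_chapter,handbook_kemeny_chapter}. The key step is then the reduction: given an arbitrary instance, i.e.~a (weighted) majority tournament $T$ on which winner determination is hard, I would invoke Theorem~\ref{mcgarvey} to construct in polynomial time a two-crossing profile $\calP$ whose weighted majority tournament equals $T$. Since $O(m^2 W)$ voters suffice and $W$ is polynomially bounded in any reasonable encoding of the hard instances, this construction is polynomial-time, hence a valid many-one reduction.

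One subtlety to address carefully is the \emph{same-parity} hypothesis in Theorem~\ref{mcgarvey}: the theorem only realizes weighted tournaments whose edge weights all share a common parity (with absent edges treated as weight $0$). For the unweighted tournament solutions (Slater, Banks, Minimal Extending Set, Tournament Equilibrium Set) this is harmless, since one may take every present edge to have weight $1$ and every absent edge weight $0$, and the induced unweighted tournament (the orientation alone) is all that matters; if a hard unweighted instance happens to mix present and absent edges, one can first uniformly scale all margins to a common odd value without changing the orientation, so the parity constraint is met. For the weighted rules Kemeny and Ranked Pairs, I would note that their winners are invariant under uniformly scaling all margins by a positive constant, so multiplying all weights by $2$ forces even parity (absent edges remaining $0$) while preserving the winner set, again satisfying the hypothesis.

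I expect the main obstacle to be precisely this parity and scaling bookkeeping: one must verify for each of the six rules that the relevant invariance (scaling of weights, or dependence only on the orientation) genuinely holds and that the hard instances from the literature can be massaged into same-parity form without destroying the reduction. Once that is checked, the argument closes cleanly --- composing the polynomial reduction of Theorem~\ref{mcgarvey} with the known hardness reductions yields NP-hardness of winner determination for each rule restricted to two-crossing elections, establishing the corollary.
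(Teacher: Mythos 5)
Your proposal is correct and matches the paper's (essentially unstated) argument: the paper derives the corollary directly from Theorem~\ref{mcgarvey} by observing that the listed rules are (weighted) tournament solutions whose known NP-hardness on general (weighted) tournaments transfers once every same-parity weighted majority tournament is realizable by a polynomial-size two-crossing profile. One tiny slip in a branch you do not actually need: for an instance mixing present and absent edges, scaling the margins to a common \emph{odd} value leaves the absent edges at the even weight $0$, so you would scale to an even value instead; for genuine (complete) tournaments this case never arises, and your argument goes through as written.
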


\begin{figure}
\begin{center}
%\begin{tblr}{| *{6}{Q[c, 0.21cm]} | Q[c, 0.24cm] | *{2}{Q[c, 0.20cm]} *{4}{Q[c, 0.35cm]} |}
\begin{tblr}{| *{6}{Q[c, 0.18cm]} | Q[c, 0.21cm] | *{2}{Q[c, 0.17cm]} *{4}{Q[c, 0.32cm]} |}
\hline
$v_1$ & $v_2$ & $v_3$ & $v_4$ & $v_5$ & $v_6$& $v_7$& $v_8$& $v_9$& $v_{10}$ & $v_{11}$ & $v_{12}$ & $v_{13}$\\
\hline
$\tikzmarknode{1_1}{1}$ & $\tikzmarknode{2_1}{2}$ & $\tikzmarknode{3_1}{2}$ & $\tikzmarknode{4_1}{2}$ & $\tikzmarknode{5_1}{3}$ & $\tikzmarknode{6_1}{3}$ & $\tikzmarknode{7_1}{4}$ & $\tikzmarknode{8_1}{4}$ & $\tikzmarknode{9_1}{4}$ & $\tikzmarknode{10_1}{1}$ & $\tikzmarknode{11_1}{1}$ & $\tikzmarknode{12_1}{1}$ & $\tikzmarknode{13_1}{1}$ \\
$\tikzmarknode{1_2}{2}$ & $\tikzmarknode{2_2}{1}$ & $\tikzmarknode{3_2}{3}$ & $\tikzmarknode{4_2}{3}$ & $\tikzmarknode{5_2}{2}$ & $\tikzmarknode{6_2}{4}$ & $\tikzmarknode{7_2}{3}$ & $\tikzmarknode{8_2}{3}$ & $\tikzmarknode{9_2}{1}$ & $\tikzmarknode{10_2}{4}$ & $\tikzmarknode{11_2}{4}$ & $\tikzmarknode{12_2}{2}$ & $\tikzmarknode{13_2}{2}$ \\
$\tikzmarknode{1_3}{3}$ & $\tikzmarknode{2_3}{3}$ & $\tikzmarknode{3_3}{1}$ & $\tikzmarknode{4_3}{4}$ & $\tikzmarknode{5_3}{4}$ & $\tikzmarknode{6_3}{2}$ & $\tikzmarknode{7_3}{2}$ & $\tikzmarknode{8_3}{1}$ & $\tikzmarknode{9_3}{3}$ & $\tikzmarknode{10_3}{3}$ & $\tikzmarknode{11_3}{2}$ & $\tikzmarknode{12_3}{4}$ & $\tikzmarknode{13_3}{3}$ \\
$\tikzmarknode{1_4}{4}$ & $\tikzmarknode{2_4}{4}$ & $\tikzmarknode{3_4}{4}$ & $\tikzmarknode{4_4}{1}$ & $\tikzmarknode{5_4}{1}$ & $\tikzmarknode{6_4}{1}$ & $\tikzmarknode{7_4}{1}$ & $\tikzmarknode{8_4}{2}$ & $\tikzmarknode{9_4}{2}$ & $\tikzmarknode{10_4}{2}$ & $\tikzmarknode{11_4}{3}$ & $\tikzmarknode{12_4}{3}$ & $\tikzmarknode{13_4}{4}$ \\
\hline
\end{tblr}\hspace{2cm}
\end{center}
\caption{Construction in the proof of Theorem \ref{mcgarvey} for $m = 4$ candidates. Voters are arranged in a \two-crossing order.
}
\label{2-crossing-mcgarvey}
\end{figure}

\begin{tikzpicture}[overlay,remember picture, shorten >=-3pt, shorten <= -3pt] % Shorten options make one and end (and the other) of each line longer by 3pt.
\draw[line width = 2.5mm, opacity = 0.2, color = blue] (1_1.west) -- (1_1.east) -- (2_2.west) -- (2_2.east) -- (3_3.west) -- (3_3.east) -- (4_4.west) -- (4_4.east) -- (5_4.west) -- (5_4.east) -- (6_4.west) -- (6_4.east) -- (7_4.west) -- (7_4.east) -- (8_3.west) -- (8_3.east) -- (9_2.west) -- (9_2.east) -- (10_1.west) -- (10_1.east) -- (11_1.west) -- (11_1.east) -- (12_1.west) -- (12_1.east) -- (13_1.west) -- (13_1.east);
\draw[line width = 2.5mm, opacity = 0.2, color = orange] (1_2.west) -- (1_2.east) -- (2_1.west) -- (2_1.east) -- (3_1.west) -- (3_1.east) -- (4_1.west) -- (4_1.east) -- (5_2.west) -- (5_2.east) -- (6_3.west) -- (6_3.east) -- (7_3.west) -- (7_3.east) -- (8_4.west) -- (8_4.east) -- (9_4.west) -- (9_4.east) -- (10_4.west) -- (10_4.east) -- (11_3.west) -- (11_3.east) -- (12_2.west) -- (12_2.east) -- (13_2.west) -- (13_2.east);
\draw[line width = 2.5mm, opacity = 0.2, color = red] (1_3.west) -- (1_3.east) -- (2_3.west) -- (2_3.east) -- (3_2.west) -- (3_2.east) -- (4_2.west) -- (4_2.east) -- (5_1.west) -- (5_1.east) -- (6_1.west) -- (6_1.east) -- (7_2.west) -- (7_2.east) -- (8_2.west) -- (8_2.east) -- (9_3.west) -- (9_3.east) -- (10_3.west) -- (10_3.east) -- (11_4.west) -- (11_4.east) -- (12_4.west) -- (12_4.east) -- (13_3.west) -- (13_3.east);
\draw[line width = 2.5mm, opacity = 0.2] (1_4.west) -- (1_4.east) -- (2_4.west) -- (2_4.east) -- (3_4.west) -- (3_4.east) -- (4_3.west) -- (4_3.east) -- (5_3.west) -- (5_3.east) -- (6_2.west) -- (6_2.east) -- (7_1.west) -- (7_1.east) -- (8_1.west) -- (8_1.east) -- (9_1.west) -- (9_1.east) -- (10_2.west) -- (10_2.east) -- (11_2.west) -- (11_2.east) -- (12_3.west) -- (12_3.east) -- (13_4.west) -- (13_4.east);
\end{tikzpicture}

\section{Young's Rule}

In this section we show how Young scores can be computed in polynomial time by setting up integer programs whose LP relaxations have totally unimodular matrices. This means that vertices of the feasible regions have integer coordinates, so it is enough to solve the LPs.

\begin{theorem}\label{theorem_young} Given a \two-crossing profile $\calP$ with voter set $V$ and candidate set $C$, and a candidate $c \in C$, the weak and strong Young scores of $c$ can be computed in polynomial time.
\end{theorem}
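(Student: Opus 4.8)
The plan is to reduce the task, for the fixed candidate $c$, to solving polynomially many feasibility linear programs whose constraint matrices are totally unimodular, so that the LP relaxation already has integral vertices and hence decides the corresponding integer program.

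First I would run the recognition algorithm of Theorem \ref{theorem_reco} to obtain, in time $O(nm^2)$, a permutation $\sigma$ witnessing that $\calP$ is \two-crossing, and relabel the voters $1, \dots, n$ in this circular order. By Lemma \ref{lemma:2-cross-circle} together with Lemma \ref{lemma:sc-circle-vs-matrix}, for every candidate $c' \neq c$ the set $I_{c'} = \{v : c \succ_v c'\}$ of voters preferring $c$ to $c'$ is a circular interval of $[n]$, and its complement $\bar I_{c'}$ (the $c'$-preferrers, since orders are strict) is the complementary circular interval; I precompute all these endpoints. Introduce binary variables $z_v \in \{0,1\}$, with $z_v = 1$ meaning voter $v$ is removed. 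Then $c$ becomes a weak Condorcet winner after the removal encoded by $z$ iff for every $c' \neq c$ the retained $c$-preferrers outnumber the retained $c'$-preferrers, which rearranges to the $\pm1$ ``balance'' constraint $\sum_{v \in I_{c'}} z_v - \sum_{v \in \bar I_{c'}} z_v \le m_{c,c'}$ (for the strong score, replace the right-hand side by $m_{c,c'} - 1$).

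The obstacle is that this balance constraint is neither an interval constraint nor has $0/1$ coefficients. I overcome it with two substitutions, both legitimised by fixing the number of removed voters $t = \sum_v z_v$ and iterating over all $t \in \{0, \dots, n\}$. For fixed $t$, adding the identity $\sum_{v\in I_{c'}} z_v + \sum_{v \in \bar I_{c'}} z_v = t$ to the balance constraint yields the equivalent single-sided interval constraint $\sum_{v \in I_{c'}} z_v \le \lfloor (m_{c,c'} + t)/2 \rfloor$, which now has coefficients in $\{0,1\}$; I keep $\sum_v z_v = t$ as an explicit constraint. Second, I linearise the circle by cutting it between positions $n$ and $1$: any $I_{c'}$ that wraps around this cut is replaced, again using $\sum_v z_v = t$, by the complementary constraint $\sum_{v \in \bar I_{c'}} z_v \ge t - \lfloor (m_{c,c'}+t)/2\rfloor$, whose support $\bar I_{c'}$ is a genuine (non-wrapping) contiguous block of $1, \dots, n$.

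After these steps, every constraint of the system has $0/1$ coefficients whose support is contiguous in the fixed order $1, \dots, n$: the interval sums, the all-ones row encoding $\sum_v z_v = t$, and the singleton box constraints $0 \le z_v \le 1$. Hence the coefficient matrix is an interval matrix and therefore totally unimodular, and all right-hand sides are integers, so the feasibility polytope over $0 \le z_v \le 1$ has integral vertices. Consequently the LP is feasible iff some removal of exactly $t$ voters turns $c$ into a weak (resp.\ strong) Condorcet winner, which I can decide in polynomial time; the weak (strong) Young score of $c$ is then the smallest $t$ for which the LP is feasible, or $\infty$ if none is (for the weak score $t = n$ is always feasible, since with no voters all margins are $0$). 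I expect the crux to be exactly this total-unimodularity argument, where the circular-interval structure supplied by \two-crossingness is essential. I would also stress that monotonicity of feasibility in $t$ is \emph{not} required — and may genuinely fail, since forcing exactly $t$ removals can break a tie that a smaller removal preserved — because taking the minimum feasible $t$ over all $t$ still returns the Young score, by its definition as a minimum number of removals.
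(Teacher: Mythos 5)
Your proposal is correct and follows essentially the same route as the paper: fix the number of removed (equivalently, kept) voters and iterate over it, use that fixed count to convert the $\pm1$ majority-margin constraints into $0/1$ circular-interval constraints, complement the wrap-around rows to obtain a genuine interval (consecutive-ones) matrix, and conclude integrality of the feasibility polytope from total unimodularity. The only differences are cosmetic — you parameterize by removed voters $t$ where the paper uses kept voters $s = n - t$, and you explicitly invoke the recognition algorithm first, a step the paper notes can be skipped since the rewriting only certifies total unimodularity and need not be performed to check feasibility.
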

\begin{proof}
We present the algorithm for the weak Young score, the strong case being analogous. We are hence interested in removing a minimal number of voters such that the majority margins satisfy $m_{c, c'} \geq 0$ for all $c' \in C \setminus \{c\}$. To do so, set up an integer program $P$ with variables $(x_v)_{v \in V} \in \{0, 1\}$, where $x_v$ is $1$ iff voter $v$ is kept; i.e.~it is not removed. Then, $m_{c. c'} \geq 0$ can be rewritten as $\sum_{v \in V}x_v\den{c \succ_v c'} \geq 0$; where by $\den{S} = 2[S] - 1$ we mean $1$ when $S$ holds and $-1$ otherwise. The goal is to maximize $\sum_{v \in V}{x_v}$. If we ignore the $x_v \in \{0, 1\}$ constraints and henceforth assume that voters are ordered in a \two-crossing fashion, then the constraints matrix of $P$ satisfies a property similar to the row-wise version (implicit from now on) of \CConeP:~it consists of values $\pm1$ and on each row ones form a circular interval. Unfortunately, not all such matrices are totally unimodular; e.g~the $2 \times 2$ matrix with $1$ on the main diagonal and $-1$ otherwise. This means that theorems regarding total unimodularity do not immediately apply.

To mitigate this difficulty, we need a slightly different approach:~we will successively check for each value $s = n, n - 1, \ldots, 0$ whether a solution keeping exactly $s$ of the $n$ voters exists. For a fixed $s$, we add the additional constraint that $\sum_{v \in V}x_v = s$. With the sum of the $x$'s fixed we can rewrite our main inequalities as follows:
\begin{gather*}
    \sum_{v \in V}x_v\den{c \succ_v c'} \geq 0 \text{ iff }
    \sum_{v \in V}x_v\left(2\iverson{c \succ_v c'} - 1\right) \geq 0 \text{ iff } \\ 
    2\sum_{v \in V}x_v\iverson{c \succ_v c'} \geq \sum_{v \in V} x_v \text{ iff }
    \sum_{v \in V}x_v\iverson{c \succ_v c'} \geq \ceil*{\frac{s}{2}}
\end{gather*}

\noindent Therefore, for a fixed $s$, our integer program $P_s$ consists of checking the feasibility of the following constraints:
\begin{align}
&\sum_{v \in V}x_v = s &&  \label{cons_1} \\
&\sum_{v \in V}x_v\iverson{c \succ_v c'} \geq \ceil*{\frac{s}{2}}, && \text{$c' \in C \setminus \{c\}$}  \label{cons_2} \\
&x_v \in \{0, 1\}, && \text{$v \in V$} \label{cons_3}
\end{align}

\noindent At this point, if we relax constraints (\ref{cons_3}) to $0 \leq x_v \leq 1$, then the matrix of the resulting linear program $L_s$ satisfies \CConeP. Unfortunately, this still does not guarantee total unimodularity; e.g.~the $3 \times 3$ matrix with $0$ on the main diagonal and $1$ otherwise has determinant $2 \notin \{-1, 0, 1\}$. 

In one last step, we slightly adjust our constraints to get a matrix which is \ConeP, and hence provably totally unimodular \cite{fulkerson_gross}. Note that constraints (\ref{cons_1}) and (\ref{cons_3}) do not make use of circularity, so only constraints (\ref{cons_2}) might need adjustments. Consider some constraint $\sum_{v \in V}x_v\iverson{c \succ_v c'} \geq \ceil*{\frac{s}{2}}$ needing adjustment; this is because $\iverson{c \succ_v c'}$ is of the form $11\ldots1100\ldots0011\ldots11$ as $v$ ranges over the voters. We can then write
\begin{gather*}
    \sum_{v \in V}x_v\iverson{c \succ_v c'} \geq \ceil*{\frac{s}{2}} \text{ iff } \sum_{v \in V}x_v\left(1 - \iverson{c' \succ_v c}\right) \geq \ceil*{\frac{s}{2}} \text{ iff } \\
     s - \sum_{v \in V}x_v\iverson{c' \succ_v c} \geq \ceil*{\frac{s}{2}} \text{ iff } \sum_{v \in V}x_v\iverson{c' \succ_v c} \leq \floor*{\frac{s}{2}}
\end{gather*}
\noindent For this equivalent condition, the coefficients $\iverson{c' \succ_v c}$ are of the form $00\ldots0011\ldots1100\ldots00$ as $v$ ranges over $V$. Therefore, if we replace each constraint which uses circularity as such, the matrix of the resulting LP will be \ConeP, and so totally unimodular. This means that the feasible region has integer vertices \cite{hoffman_kruskal},
%\footnote{Usually given as ``If $A$ is totally unimodular and $b$ is an integer vector, then the vertices of $\{x : Ax \geq b\}$ have integer coordinates'', but one can easily accommodate $\geq$ and $=$ constraints by noting that total unimodularity is closed under multiplying rows by $-1$.}
so the IP and the LP are equifeasible.
Note that the replacements we made in (\ref{cons_2}) proved total unimodularity, but need not be executed in practice; i.e.~it is enough to check the feasibility of $L_s$. If we now use any polynomial time algorithm for checking feasibility, like \cite{karmarkar}, or any more sophisticated interior-point methods, the poly-time bound follows.
\end{proof}

In the following, we further refine our technique to get an algorithm of a more practical worst-case time complexity.

\begin{theorem}\label{theorem_improved_young}
Given a \two-crossing profile $\calP$ and $c \in C$, the strong/weak Young score of $c$ can be computed in time $O((n + m^2)n^{3/2}\log{n})$.
\end{theorem}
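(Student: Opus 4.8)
The plan is to build directly on the per-size programs $L_s$ from the proof of Theorem~\ref{theorem_young} and to replace their generic LP solve by a combinatorial flow computation, then to search over $s$ efficiently. Recall that, after fixing a \two-crossing order and applying the flip of Lemma~\ref{lemma_circle_is_line} (which turns the relevant circular intervals into ordinary ones), each ``bad set'' $B_{c'}=\{v:c'\succ_v c\}$ becomes an interval of the line $1,\dots,n$, and $L_s$ asks whether one can keep exactly $s$ voters so that every interval $B_{c'}$ holds at most $\lfloor s/2\rfloor$ of them (for the strong score, at most $\lfloor(s-1)/2\rfloor$). Thus each $L_s$ is an interval, consecutive-ones feasibility program, and the entire task reduces to solving many such programs quickly.

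First I would turn the interval program into a flow / shortest-path problem. Passing to prefix-sum variables $y_i=\sum_{v\le i}x_v$ with $y_0=0$ rewrites $0\le x_v\le 1$ as $0\le y_v-y_{v-1}\le 1$, each interval cap $\sum_{v\in[\ell:r]}x_v\le\lfloor s/2\rfloor$ as $y_r-y_{\ell-1}\le\lfloor s/2\rfloor$, and $\sum_v x_v=s$ as $y_n=s$. This is a system of difference constraints whose feasibility is a single shortest-path / min-cost-flow question on a network with $O(n)$ nodes and only $O(n+m)$ arcs: the bidirectional path arcs coming from the box constraints, plus one ``shortcut'' arc per candidate coming from its interval. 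Total unimodularity guarantees integral optima, so this combinatorial solve is faithful to the integer program, and it is what supplies the $\sqrt n$/$n^{3/2}$-type factor in place of a black-box LP solver.

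Second, I would break the coupling between the kept-count $s$ and the threshold $\lfloor s/2\rfloor$, which is exactly why $L_s$ had to be parameterised by $s$. Fixing a cap $t$, let $g(t)$ be the largest number of voters selectable with every interval holding at most $t$; in the shortest-path picture $g(t)$ is just the $0$-to-$n$ distance with shortcut arcs of length $t$, i.e.\ a minimum over paths of $(\text{const})+t\cdot(\#\text{shortcuts})$, so $g$ is non-decreasing and \emph{concave}. A size $s$ is achievable precisely when $g(\lfloor s/2\rfloor)\ge s$, and concavity makes the predicate $g(t)\ge 2t$ hold on an initial range of caps; hence the largest feasible $s$, and thus the Young score $n-s^{*}$ (with $s^{*}$ the maximum feasible $s$), is recovered either by a linear scan over the $O(n)$ caps or, exploiting concavity, by binary search. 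Combining the number of flow solves with the per-solve cost on the $O(n+m^2)$-size network and the scaling overhead of the flow subroutine yields the claimed $O((n+m^2)n^{3/2}\log n)$ bound; the strong case is identical with $\lfloor(s-1)/2\rfloor$ in place of $\lfloor s/2\rfloor$.

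The main obstacle is precisely this $s$-versus-$\lfloor s/2\rfloor$ coupling: it is what destroys total unimodularity of the unrestricted program (as already observed after Theorem~\ref{theorem_young}) and what would naively force $\Theta(n)$ independent LP solves. Meeting the stated running time is therefore twofold --- establishing the flow reformulation with a network small and well-structured enough (bounded/unit capacities on $O(n+m^2)$ arcs) to admit an $O(n^{3/2})$-type max-flow rather than a generic min-cost flow, and controlling the number of such solves, where the concavity of $g$ is the key enabler. I would verify the concavity claim and the exact capacity pattern most carefully, since the final exponent hinges on both.
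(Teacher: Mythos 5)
Your core route is the same as the paper's: pass to prefix-sum variables, observe that each fixed-$s$ program becomes a system of difference constraints, decide its feasibility via a negative-cycle/shortest-path computation (the paper uses Goldberg's scaling algorithm at $O(|E|\sqrt{|V|}\log n)$ per solve), and repeat for every $s = n, \dots, 0$; your ``linear scan over the $O(n)$ caps'' branch is exactly the paper's algorithm and already yields the stated bound. The one genuinely different ingredient --- concavity of $g(t)$ and a binary search over caps --- is both unnecessary for the claimed running time and not quite right as stated: the flip of Lemma \ref{lemma_circle_is_line} does not turn every bad set $B_{c'}$ into an ordinary interval carrying an upper bound. For the wrap-around pairs one instead obtains \emph{lower}-bound constraints $\sum_{v}x_v\iverson{c \succ_v c'} \geq \lceil s/2\rceil$ on ordinary intervals (the paper's set $A$ in \eqref{c_s_3}), and these do not fit the ``maximize the number of kept voters subject to caps $t$'' definition of $g$; the clean ``minimum over paths of $\mathrm{const} + t\cdot(\#\text{shortcuts})$'' picture, and hence the concavity and the decoupling of $s$ from $\lfloor s/2\rfloor$, would need repair before the $O(\log n)$-solve speedup could be claimed. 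Since you explicitly retain the linear scan as a fallback, the proof goes through, matching the paper's.
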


\begin{proof}
We start at the end of the proof of Theorem \ref{theorem_young}, from the LP whose matrix has the consecutive ones property. Our goal is to check the feasibility of the following constraints $C_s$:
\begin{align}
    & 0 \leq x_v \leq 1,                                                  & v \in V       \label{c_s_1} \\
    & \sum_{x \in V} x_v = s                                              &               \label{c_s_2} \\
    & \sum_{v \in V}x_v\iverson{c \succ_v c'} \geq \ceil*{\frac{s}{2}},  & (c, c') \in A \label{c_s_3} \\
    & \sum_{v \in V}x_v\iverson{c' \succ_v c} \leq \floor*{\frac{s}{2}},   & (c, c') \in B \label{c_s_4}
\end{align}

\noindent where $A, B$ are sets such that $A \cap B = \varnothing$, $A \cup B = \{(c, c') \in C^2 : c \neq c'\}$.

\noindent Recall that $V = [n]$. We will now give an alternative formulation of $C_s$ using $n + 1$ variables:~$S_0, \ldots, S_n$, constructed such that $S_i - S_{i - 1} = x_i$ for $i \in [n]$; i.e.~the prefix sums of $(x_i)_{i \in [n]}$. For these purposes, for each pair of distinct candidates $(c, c') \in C^2$ look at the corresponding row in \eqref{c_s_3}-\eqref{c_s_4} of the matrix of $C_s$ and let $\ell_{(c, c')}$ and $r_{(c, c')}$ be the indices of the leftmost/rightmost one on that row. Then, $C_s$ can be reformulated as follows, which we call $D_s$:
\begin{align}
    & 0 \leq S_i - S_{i - 1} \leq 1,                       & i \in [n] \label{d_s_1} \\
    & S_n - S_0 = s                                        &          \label{d_s_2} \\
    & S_{r_x} - S_{\ell_x - 1} \geq \ceil*{\frac{s}{2}},  & x \in A   \label{d_s_3} \\
    & S_{r_x} - S_{\ell_x - 1} \leq \floor*{\frac{s}{2}},   & x \in B   \label{d_s_4}
\end{align}
\noindent Constraint set $C_s$ is feasible iff constraint set $D_s$ is feasible, so it is enough to check whether $D_s$ is feasible. Now, observe that constraints in \eqref{d_s_1}--\eqref{d_s_4} can all be rewritten using constraints of the form ${S_i - S_j \leq \mathcal{C}}$, where $\mathcal{C}$ is integer and $i, j \in [0:n]$. This shows that $D_s$ is a so-called ``difference constraints system''. In \cite{cormen}, the standard approach for solving such systems is explained, which we now apply to our situation. Let $G$ be a weighted directed graph with $n + 2$ vertices $V_G = [0 : n] \cup \{*\}$ and edges $E_G$ as follows:~for all $i \in [0 : n]$ add an edge $* \xrightarrow{0} i$, and for all constraints $S_i - S_j \leq \mathcal{C}$ in $D_s$ add an edge $j \xrightarrow{\mathcal{C}} i$. As argued in \cite{cormen}, $D_s$ is satisfiable iff $G$ does not have a negative-weight cycle. Moreover, if $G$ has no negative-weight cycles, then $S = (\delta_*(0), \delta_*(1), \ldots, \delta_*(n))$ is a solution to $D_s$, where $\delta_*(i)$ is the distance between $*$ and $i$ in $G$. Checking for negative cycles and computing distances from $*$ can both be achieved in time $O\left(|V_G||E_G|\right) = O(n(n + m^2))$ using the Bellman-Ford algorithm, or faster alternatives, like SPFA, which will, on most practical instances, have far superior runtime. If we are interested in an algorithm with better worst-case theoretical guarantees, then the algorithm of \cite{faster_shortest_paths} solves the problem in $O\left(|E_G|\sqrt{|V_G|}\log{s}\right) = O((n + m^2)\sqrt{n}\log{n})$. Since this has to be run for every $s = n, \ldots, 0$, the overall complexity becomes $O((n + m^2)n^{3/2}\log{n})$. Note that, unlike Theorem \ref{theorem_young}, this result uses that recognition is tractable in $O(nm^2)$.
\end{proof}

\section{The Chamberlin--Courant Rule}

In this section we show that computing the least possible dissatisfaction and also a winning committee for the Chamberlin--Courant rule (CC in this section) can both be achieved in polynomial time under \two-crossing elections. 
Since the recognition problem can be solved in polynomial time by Theorem \ref{theorem_reco}, assume voters $V = [n]$ are ordered such that our profile is \two-crossing with respect to the identity permutation. For the following key lemma, we call a $k$-assignment function $w$ \emph{illegal} if there are two candidates $a \neq b$ and four voters $i_1 < i_2 < i_3 < i_4$ such that $w(i_1) = w(i_3) = a$ and $w(i_2) = w(i_4) = b$. We call $w$ \emph{legal} if it is not illegal.

\begin{lemma} \label{lemma_structure} For any instance $(\calP, \rho, k)$ of CC there exists a legal $k$-assignment $w_{opt}$ that is optimal for $\rho$ and $\calP$.
\end{lemma}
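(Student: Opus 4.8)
The plan is to avoid any iterative uncrossing and instead pass to a canonical optimal assignment in one shot. I would start from an arbitrary optimal $k$-assignment $w_{opt}$ and let $W = w_{opt}(V)$ be its committee, so $|W| \le k$. For a \emph{fixed} committee the total dissatisfaction $\Phi_\rho(\calP, w) = \sum_{v} \rho(v, w(v))$ is separable across voters, hence minimized by sending every voter to a favourite committee member. I therefore define a new assignment $w'$ that sends each $v \in V$ to the candidate $c \in W$ minimizing $\rho(v, c)$, breaking ties in favour of the candidate ranked higher by $\succ_v$ (this is well defined since $\succ_v$ is a linear order). Then $w'(V) \subseteq W$, so $w'$ is again a $k$-assignment, and $\Phi_\rho(\calP, w') \le \Phi_\rho(\calP, w_{opt})$; optimality of $w_{opt}$ forces equality, so $w'$ is optimal as well. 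It then remains only to prove that this tie-broken greedy assignment is legal, and I would take $w_{opt} := w'$.

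The heart of the argument is showing that $w'$ is legal. Fix two distinct candidates $a, b \in W$ and consider the set $I = \{v : a \succ_v b\}$, which by Lemma~\ref{lemma:2-cross-circle} is a circular interval over $[n]$. I claim every voter assigned to $a$ by $w'$ lies in $I$, while every voter assigned to $b$ lies in its complement. Indeed, if $w'(v) = a$ then $a$ is the global argmin, so $\rho(v, a) \le \rho(v, b)$, and $a$ wins the tie-broken comparison against $b$: either $\rho(v, a) < \rho(v, b)$, which by consistency of $\rho$ forces $a \succ_v b$ (otherwise $b \succ_v a$ would give $\rho(v, b) \le \rho(v, a)$), or $\rho(v, a) = \rho(v, b)$ and the tie was broken in favour of $a \succ_v b$; in both cases $v \in I$. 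The symmetric statement holds for voters assigned to $b$. Thus the $a$-voters and the $b$-voters of $w'$ sit inside the two complementary circular intervals $I$ and $[n] \setminus I$.

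Finally I would translate this separation into the forbidden-pattern language on the line $[1:n]$. Writing $I$ as either an ordinary interval $[\ell:r]$ or a wrap-around interval, in the first case all $a$-voters lie in $[\ell:r]$, so no $b$-voter can occur strictly between two $a$-voters, which already rules out the pattern $a, b, a, b$; in the second case the $b$-voters occupy an ordinary interval and no $a$-voter occurs strictly between two $b$-voters, again excluding $a, b, a, b$. Since $a, b$ were arbitrary, $w'$ is legal, completing the proof. I expect the only delicate point to be the tie-handling: without the $\succ_v$ tie-break a voter with $\rho(v, a) = \rho(v, b)$ could be placed inconsistently relative to $I$ and spoil the interval structure, so fixing the tie-breaking rule up front is precisely what lets the clean two-case line analysis go through.
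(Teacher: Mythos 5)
Your proof is correct, but it takes a genuinely different route from the paper's. The paper argues by iterative uncrossing: it takes any optimal $w$, locates an illegal pattern $i_1<i_2<i_3<i_4$ with $w(i_1)=w(i_3)=a$, $w(i_2)=w(i_4)=b$, uses the two-crossing property directly (three crossings of the pair $(a,b)$ would be forced otherwise) to find one of the four voters assigned the less-preferred of $\{a,b\}$, swaps that voter's representative without increasing dissatisfaction, and repeats; a termination argument (each swap strictly improves some voter's rank of their representative) closes the proof. You instead canonicalize in one shot: fix the optimal committee $W$, reassign every voter to their $\rho$-best member of $W$ with ties broken by $\succ_v$ (equivalently, to their $\succ_v$-top member of $W$, which by consistency of $\rho$ is automatically a $\rho$-minimizer), and then invoke the circular-interval characterization of Lemma~\ref{lemma:2-cross-circle} to place all $a$-voters in the circular interval $\{v : a \succ_v b\}$ and all $b$-voters in its complement, which excludes the pattern $a,b,a,b$ on the line since at least one of the two complementary circular intervals is an ordinary interval. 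Your version buys a constructive canonical optimal assignment, avoids the termination argument, and shows the slightly stronger statement that every optimal committee admits a legal optimal assignment onto a subset of it; the paper's version is more local, uses only the raw crossing count rather than the circular-interval reformulation, and matches the swap-based intuition it reuses later. Your observation that the tie-breaking rule is essential is accurate: without it a voter indifferent under $\rho$ between $a$ and $b$ could land on the wrong side of the interval and break the separation.
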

\begin{proof} Start with any optimal $k$-assignment $w$. If $w$ is legal, then we are done, otherwise let $a \neq b$ and $i_1 < i_2 < i_3 < i_4$ witnesses the illegality of $w$. Since $\calP$ is \two-crossing we know it can not be the case that $\iverson{a \succ_{i_1} b} = \iverson{a \succ_{i_3} b} \neq \iverson{b \succ_{i_2} a} = \iverson{b \succ_{i_4} a}$, as that would result in 3 crosses for the pair $(a, b)$. Therefore, for at least one of $i_1, i_2, i_3, i_4$ assignment $w$ assigns a candidate from $\{a, b\}$ which is less preferred than the other, so just exchanging $a$ for $b$, or vice-versa, for that voter results in an assignment $w'$ such that $\Phi_\rho(\calP, w') \leq \Phi_\rho(\calP, w)$. Since $w$ was optimal, it follows that $w'$ is also optimal. We can now replace $w$ by $w'$ and repeat the reasoning iteratively, until we eventually reach a legal optimal $k$-assignment $w_{opt}$. It remains to show that this process terminates, but this is easy to see since at each step we replace a voter's assigned candidate with one that is strictly more preferred by them. 
\end{proof}

\noindent To find a winning committee for CC it is enough to find an optimal $k$-assignment for $\rho$ and $\calP$.
Given Lemma \ref{lemma_structure}, we can limit our search to legal $k$-assignments. It turns out that legal $k$-assignments admit a second characterisation, as follows. The proof is straightforward, so we leave it to the reader.
\begin{proposition}\label{prop_tree} A $k$-assignment $w = w(1), w(2), \ldots, w(n)$ is legal iff for any candidates $a \neq b$ there are no $a$'s between the first and the last $b$ or no $b$'s between the first and last $a$.
\end{proposition}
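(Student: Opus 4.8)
The plan is to translate both sides of the claimed equivalence into statements about, for each candidate $x$ occurring in the image of $w$, the positions $f_x$ and $l_x$ of its first and last occurrence in the list $w(1), \dots, w(n)$. Reading ``there are no $a$'s between the first and the last $b$'' as ``no index $i$ with $f_b < i < l_b$ has $w(i) = a$'', the negation of the proposition's right-hand side becomes:~there exist distinct $a, b$ together with an index $p$ satisfying $f_b < p < l_b$ and $w(p) = a$, and an index $q$ satisfying $f_a < q < l_a$ and $w(q) = b$. Since both this negated condition and the notion of illegality are symmetric under swapping $a$ and $b$, I would reduce the problem to proving that $w$ is illegal iff such $a, b, p, q$ exist; the proposition then follows by contraposition.

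For the forward direction, I would take an illegal $w$ witnessed by $a \neq b$ and $i_1 < i_2 < i_3 < i_4$ with $w(i_1) = w(i_3) = a$ and $w(i_2) = w(i_4) = b$. The index $i_2$ carries a $b$ and satisfies $f_a \le i_1 < i_2 < i_3 \le l_a$, hence $i_2 \in (f_a, l_a)$; symmetrically $i_3$ carries an $a$ and satisfies $f_b \le i_2 < i_3 < i_4 \le l_b$, hence $i_3 \in (f_b, l_b)$. Setting $q = i_2$ and $p = i_3$ produces exactly the required configuration.

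For the converse, I would start from $p$ with $f_b < p < l_b$, $w(p) = a$, and $q$ with $f_a < q < l_a$, $w(q) = b$, noting $p \neq q$, and split on their relative order. If $p < q$, then $f_b < p < q < l_a$ carry the values $b, a, b, a$ (using $f_b < p$, the case hypothesis $p < q$, and $q < l_a$), an alternating length-four pattern witnessing illegality. If instead $q < p$, then $f_a < q < p < l_b$ carry the values $a, b, a, b$, again witnessing illegality. Either way $w$ is illegal, closing the equivalence.

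There is no real obstacle here:~the only point to watch is that the intervals are genuinely nonempty, but the existence of $p$ already forces $f_b < l_b$ and the existence of $q$ forces $f_a < l_a$, so the degenerate cases in which a candidate occurs only once require no separate treatment. The entire content is reading off the correct alternating subsequence from the four marked positions, and the initial symmetry observation keeps the case analysis down to the single split above.
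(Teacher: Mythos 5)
Your proof is correct. Note that the paper deliberately omits a proof of this proposition (``The proof is straightforward, so we leave it to the reader''), so there is nothing to compare against; your contrapositive reformulation, the extraction of $q = i_2 \in (f_a, l_a)$ and $p = i_3 \in (f_b, l_b)$ from an illegal witness, and the recovery of the alternating pattern at positions $f_b < p < q < l_a$ (resp.\ $f_a < q < p < l_b$) in the converse constitute exactly the straightforward argument the authors intend, with the degenerate single-occurrence case correctly dispatched.
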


\noindent Given this, any legal assignment $w$ induces a strict partial order relation $\rightarrow^*$ over $C$, where $a \rightarrow^* b$ for $a \neq b$ iff there are $b$'s between the first and the last $a$ in $w$. Note that this implicitly requires no $a$'s between the first and the last $b$, by Proposition \ref{prop_tree}. Observe that two candidates $a \neq b$ are incomparable with respect to $\rightarrow^*$ iff all $a$'s precede all $b$'s in $w$, or vice-versa. Relation $\rightarrow^*$ satisfies an additional property:~if $a \neq b$ are incomparable, then there can be no $c \notin \{a, b\}$ such that $a \rightarrow^* c$ and $b \rightarrow^* c$, since a structure of the form $ a\ldots c \ldots a \ldots b\ldots c \ldots b $ occurring in $w$ would be illegal for pairs $(a, c)$ and $(b, c)$. These being said, one can now observe that the covering relation $\rightarrow$ of $\rightarrow^*$; i.e.~the inclusion minimal relation with the same transitive closure; is a forest of directed trees.\footnote{We do not use this fact explicitly, but it helps gain intuition for the DP which follows.} Since our partial order $\rightarrow^*$ is finite, there exists at least one maximal element $c_{\mathit{rt}} \in C$, which we call a ``root'' element; i.e.~such that there is no $c \in C$ with $c \rightarrow^* c_{\mathit{rt}}$.
Let us now analyze $w$ with respect to $c_{\mathit{rt}}$. Since $c_{\mathit{rt}}$ is maximal, for all $c \in C \setminus \{c_{\mathit{rt}}\}$, no $c_{\mathit{rt}}$'s occur between the first and the last $c$ in $w$. In other words, if we consider the set $w^{-1}(c_{\mathit{rt}})$ of positions where $c_{\mathit{rt}}$ occurs in $w$, then all $c$'s occur between two consecutive $c_{\mathit{rt}}$'s, or before/after the first/last $c_{\mathit{rt}}$. Put differently, candidate $c_{\mathit{rt}}$ ``splits'' the range $1 \ldots n$ into buckets delimited by consecutive entries in $w^{-1}(c_{\mathit{rt}}) \cup\{0, n + 1\}$, such that any candidate $c \in C \setminus \{c_{\mathit{rt}}\}$ appears in at most one bucket. This reasoning can then be carried out recursively in each bucket as well.

The idea of our approach can now be stated intuitively:~we will proceed by dynamic programming, solving the main problem by trying out all possible values of $w^{-1}(c_{\mathit{rt}})$, and then invoking recursively on each resulting bucket. There are a number of burning issues at this point:~(i) there are exponentially many values of $w^{-1}(c_{\mathit{rt}})$ to try, (ii) there are exponentially many ways to distribute the number $k$ of representatives into the buckets
(iii) how do we ensure that no candidate is used in two distinct buckets, maybe further down the recursion, fact which might render the assignment illegal? Before addressing (i)-(ii) specifically, let us first give a preliminary version of the dynamic program, which will run in exponential time, and argue for its correctness in addressing (iii). Introduce $\dyp[\ell, r, t]$ for $1 \leq \ell \leq r \leq n$, $1 \leq t \leq k$ to mean the least total dissatisfaction attainable for voters in $[\ell : r]$ using a legal $t$-assignment restricted to those voters. For ease of presentation, we use the degenerate base-cases $\dyp[\ell, \ell - 1, -] = 0$ for $\ell \in [n + 1]$ and $\dyp[\ell, r, 0] = \infty$ for $1 \leq \ell \leq r \leq n$. Then, the recurrence relation is:%
\begin{gather}
     \dyp[\ell, r, t] = \min\left\{\sum_{i = 1}^{x} \rho(w_i, c_{\mathit{rt}})+ \sum_{i =  0}^{x}\dyp[w_i + 1, w_{i + 1} - 1, t_i] : \right. \nonumber \\
     c_{\mathit{rt}} \in C, 1 \leq x \leq r - \ell + 1, \ell \leq w_1 < \ldots < w_x \leq r, \nonumber \\ 
     t_0, \ldots, t_x \geq 0 \text{ and } t_0 + \ldots + t_x = t - 1 \bigg\}
     \label{eq_rec}
\end{gather}%
\noindent where we wrote $w_1, \ldots, w_x$ for the elements of $w^{-1}(c_{\mathit{rt}})$ and used the convention that $w_0 = \ell - 1$ and $w_{x + 1} = r + 1$. In other words:~we choose the root candidate $c_{\mathit{rt}}$, we choose $w^{-1}(c_{\mathit{rt}})$, and we choose a way to distribute the remaining $t - 1$ candidates among the buckets; each bucket is then solved ``recursively'' with the respective allowed committee size. For completeness, subproblems can be computed in increasing order of $t$, breaking ties arbitrarily. The cost of an optimal $k$-assignment $w_{opt}$ will be in $\dyp[1, n, k]$ at the end. To retrieve one such $w_{opt}$,  additional bookkeeping is required:~for each subproblem $(\ell, r, x)$ keep track of which values of $c_{\mathit{rt}}$, $x$, $w_1, \ldots, w_x$ and $t_0, \ldots, t_x$ were used to achieve the minimum, and then at the end trace those back to build $w_{opt}$.

As argued previously, not only does this DP require exponential time to compute, it is not even clearly correct, reason being that it can produce illegal $k$-assignments by reusing candidates, especially along different branches of the recursion. However, note that this comes at a price:~whenever a candidate is reused it is counted as a new candidate out of the maximum of $k$ allowed. Nevertheless, none of this is harmful:~consider the reconstructed optimal assignment $w_{opt}$ returned by the DP. This is a $k$-assignment since we only ever over-count candidates in the assembly, but never under-count. By construction, our recurrence clearly correctly considers all legal $k$-assignments, and will thus return a cost at most that of the least dissatisfaction legal $k$-assignment. Since by Lemma \ref{lemma_structure} no illegal $k$-assignment can be better than the best legal one, it follows that the returned assignment $w_{opt}$ is optimal, regardless of whether it is legal or not.

This proof approach can be summarized as follows, and is also implicit in the works of \cite{constantinescu_elkind_2021}:~(1) show optimal solutions with a rigid structure exist; (2) set up a DP which looks only for solutions of this form; (3) note that the DP occasionally produces non-conforming solutions, but without affecting global correctness.

Having addressed concern (iii), we now need to speed up the DP --- we need a faster way to compute \eqref{eq_rec}. To achieve this, introduce an auxiliary second dynamic program, $\dyp_2[\ell, r, t, c_{\mathit{rt}}]$, with the same semantics as $\dyp$, but enforcing a certain value for the root $c_{\mathit{rt}}$. It is straightforward to see that 
\begin{equation} \label{eq_easy}
\dyp[\ell, r, t] = \min_{c_{\mathit{rt}} \in C}\dyp_2[\ell, r, t, c_{\mathit{rt}}]
\end{equation}
so it remains to show how $\dyp_2$ can be computed in polynomial time, which we do with the following.

\begin{lemma} $\dyp_2$ satisfies the following recurrence relation:
\begin{gather}
\dyp_2[\ell, r, t, c_{\mathit{rt}}] = 
\min\big\{ \dyp[\ell, w_1 - 1, t_0] + \rho(w_1, c_{\mathit{rt}}) + \nonumber \\  
        \min\left\{\dyp[w_1 + 1, r, t - t_0 - 1],
              \dyp_2[w_1 + 1, r, t - t_0, c_{\mathit{rt}}]\right\} : \nonumber \\ w_1 \in [\ell, r], 0 \leq t_0 \leq t - 1 \big\} \label{eq_better_rec}
\end{gather}
\end{lemma}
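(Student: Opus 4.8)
The plan is to prove that the recurrence \eqref{eq_better_rec} correctly computes $\dyp_2[\ell, r, t, c_{\mathit{rt}}]$ by showing inequalities in both directions. The key structural insight to exploit is that $\dyp_2$ fixes $c_{\mathit{rt}}$ as the designated root, so the positions $w^{-1}(c_{\mathit{rt}}) = \{w_1 < \ldots < w_x\}$ must all lie in $[\ell, r]$, and each bucket strictly between consecutive occurrences of $c_{\mathit{rt}}$ must avoid $c_{\mathit{rt}}$ entirely. The idea of \eqref{eq_better_rec} is to ``peel off'' just the leftmost occurrence $w_1$ of $c_{\mathit{rt}}$, rather than guessing the entire set $w^{-1}(c_{\mathit{rt}})$ at once as the slow recurrence \eqref{eq_rec} does. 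Once $w_1$ is fixed, the interval $[\ell, w_1 - 1]$ forms a bucket to the left of the first root occurrence and is solved by the root-agnostic $\dyp[\ell, w_1 - 1, t_0]$, while $w_1$ itself contributes $\rho(w_1, c_{\mathit{rt}})$; the remaining interval $[w_1 + 1, r]$ must still be assigned, and the inner minimum encodes the two possibilities for it.

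First I would justify the inner minimum. After placing the leftmost $c_{\mathit{rt}}$ at $w_1$, the suffix $[w_1 + 1, r]$ is handled in one of two ways. Either $c_{\mathit{rt}}$ appears again in the suffix, in which case the suffix is itself a $\dyp_2$ subproblem rooted at the same $c_{\mathit{rt}}$ (so we recurse into $\dyp_2[w_1 + 1, r, t - t_0, c_{\mathit{rt}}]$, charging no extra candidate since $c_{\mathit{rt}}$ is already counted); or $c_{\mathit{rt}}$ does not appear in the suffix, in which case the suffix is a fresh bucket with no constraint on its root, solved by $\dyp[w_1 + 1, r, t - t_0 - 1]$ using the remaining $t - t_0 - 1$ committee slots. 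This two-way split is exactly what lets us avoid enumerating $w^{-1}(c_{\mathit{rt}})$ explicitly: the full set is recovered implicitly by the chain of $\dyp_2$ recursions, each peeling one more occurrence.

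For the forward direction ($\leq$), I would take the legal assignment realizing $\dyp_2[\ell, r, t, c_{\mathit{rt}}]$, read off its leftmost root position $w_1$ and the split of committee budget $t_0$ used strictly to the left of $w_1$, and verify that the corresponding term on the right-hand side is no larger than its cost, using the base cases $\dyp[\ell, \ell-1, -] = 0$ when $w_1 = \ell$. For the reverse direction ($\geq$), I would argue that any configuration attaining the right-hand minimum assembles into a valid assignment on $[\ell, r]$ with root $c_{\mathit{rt}}$ whose cost matches: the prefix, the single root voter, and the suffix partition $[\ell, r]$ into disjoint ranges, so their dissatisfactions add up, and the committee-size bookkeeping ($t_0 + 1 + (\text{suffix budget}) = t$) is consistent in both branches of the inner minimum.

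The main obstacle, as with the slow recurrence, is the legality/candidate-reuse concern (iii): the prefix $\dyp[\ell, w_1 - 1, t_0]$ and the detached suffix bucket $\dyp[w_1 + 1, r, t - t_0 - 1]$ might independently reuse a candidate, or reuse $c_{\mathit{rt}}$ itself, producing a technically illegal assignment. I expect to dispatch this exactly as in the preceding paragraphs of the excerpt: any such reuse only \emph{over-counts} against the budget $t$ and never under-counts, so the reconstructed assignment remains a genuine $k$-assignment, and since \eqref{eq_better_rec} still ranges over a superset of all legal assignments rooted at $c_{\mathit{rt}}$, its value is sandwiched between the optimum over legal assignments and the global optimum, which coincide by Lemma \ref{lemma_structure}. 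Hence no separate legality enforcement is needed, and correctness of \eqref{eq_better_rec} follows.
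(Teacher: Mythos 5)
Your proposal is correct and follows essentially the same route as the paper's proof: peel off only the leftmost occurrence $w_1$ of $c_{\mathit{rt}}$ together with a budget split $t_0$, solve the prefix $[\ell, w_1-1]$ as an unrestricted $\dyp$ subproblem, and handle the suffix $[w_1+1, r]$ via the same two-case inner minimum (no further root occurrences versus recursing into $\dyp_2$ with the root still ``in use''). The extra care you take with the two inequality directions and the candidate-reuse issue is consistent with, and slightly more explicit than, the paper's argument.
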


\begin{proof}
The key stands in noting that it makes little sense to try out all values of $w^{-1}(c_{\mathit{rt}}) = \{w_1, \ldots\}$ at once, so let us only go through all values of $w_1 \in [\ell : r]$ and $t_0 \in [0 : t - 1]$ and observe the optimal substructure:~for the left part, i.e.~range $[\ell : w_1 - 1]$, we use the optimal unrestricted solution with at most $t_0$ candidates, i.e.~$\dyp[\ell, w_1 - 1, t_0]$, for position $w_1$ we take the fixed cost $\rho(w_1, c_{\mathit{rt}})$, while for the right part, i.e.~range $[w_1 + 1, r]$, we have more choice. Namely, there are two possibilities: $w_1$ was the only element in $w^{-1}(c_{\mathit{rt}})$, in which case we need to take the best unrestricted solution with at most $t - t_0 - 1$ candidates, the $-1$ accounting for candidate $c_{\mathit{rt}}$, i.e.~$\dyp[w_1 + 1, r, t - t_0 - 1]$, or $|w^{-1}(c_{\mathit{rt}})| > 1$, in which case there is a $w_2$ (and potentially more occurrences of $c_{\mathit{rt}}$) to be placed; doing so optimally is precisely the definition of $\dyp_2[w_1 + 1, r, t - t_0, c_{\mathit{rt}}]$.\footnote{Note no $-1$ here because $c_{\mathit{rt}}$ is still actively ``in use'' at this point.} Altogether, we get \eqref{eq_better_rec}. 
\end{proof}

We now outline our main result. Essentially, one can now compute subproblems in increasing order of $r - \ell$, the optimal dissatisfaction being in $\dyp[1, n, k]$ at the end.

\begin{theorem}\label{CC_is_poly_time} Given an instance $(\calP, \rho, k)$ of CC where $\calP$ is \two-crossing, the optimal dissatisfaction and some winning committee can be computed in polynomial time.
\end{theorem}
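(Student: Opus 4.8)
The plan is to assemble the pieces developed above into a concrete algorithm and then verify that it runs in polynomial time; the genuinely hard conceptual work has already been done, so what remains is to fix an evaluation order and to count states and per-entry cost. First I would invoke Theorem \ref{theorem_reco} to reorder the voters so that $\calP$ is \two-crossing with respect to the identity permutation, a one-time $O(nm^2)$ cost. Correctness of the tables is then in hand: by Lemma \ref{lemma_structure} an optimal legal $k$-assignment exists; the recurrence \eqref{eq_rec} correctly evaluates the least dissatisfaction over legal assignments (possibly over-counting on reused candidates, which as argued never beats the true optimum); and \eqref{eq_easy} together with \eqref{eq_better_rec} recompute the same values through the auxiliary table $\dyp_2$.

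For the evaluation order, I would process subproblems by increasing range length $r - \ell$, and within each fixed range compute all entries $\dyp_2[\ell, r, \cdot, \cdot]$ before the entries $\dyp[\ell, r, \cdot]$. This ordering is well-founded: inspecting \eqref{eq_better_rec}, every $\dyp$- and $\dyp_2$-lookup on the right-hand side refers to a range ($[\ell : w_1 - 1]$ or $[w_1 + 1 : r]$) that is strictly shorter than $[\ell : r]$, since $w_1 \in [\ell : r]$; and \eqref{eq_easy} reads $\dyp_2$ only at the \emph{same} range, which we have arranged to be filled first. The degenerate base cases $\dyp[\ell, \ell - 1, -] = 0$ and $\dyp[\ell, r, 0] = \infty$ seed the recursion, and the optimal dissatisfaction is read off from $\dyp[1, n, k]$.

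For the complexity, there are $O(n^2 k)$ entries of $\dyp$ and $O(n^2 k m)$ entries of $\dyp_2$. Each $\dyp$-entry costs $O(m)$ via the minimization over $c_{\mathit{rt}}$ in \eqref{eq_easy}, totalling $O(n^2 k m)$; each $\dyp_2$-entry costs $O(nk)$ via the minimization over $w_1 \in [\ell : r]$ and $t_0 \in [0 : t - 1]$ in \eqref{eq_better_rec}, totalling $O(n^3 k^2 m)$, which dominates. Since committees have size at most $|C|$ we have $k \le m$, so this is $O(n^3 m^3)$, polynomial in the input size.

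Finally, to return an actual winning committee I would attach backpointers: each computed entry records the choices ($c_{\mathit{rt}}$, the split position $w_1$, and the budget split $t_0$) attaining its minimum, and tracing these from $\dyp[1, n, k]$ reconstructs an optimal $k$-assignment $w_{opt}$ efficiently, whose image $w_{opt}(V)$ is a winning committee. I expect the only real subtlety to be the reconstruction bookkeeping and double-checking the well-foundedness of the $\dyp_2$-before-$\dyp$ ordering at equal range; the delicate correctness point — that the occasionally illegal assignments produced by the DP are harmless — has already been discharged in the discussion preceding \eqref{eq_easy}, so the theorem reduces to this accounting.
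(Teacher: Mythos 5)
Your proposal is correct and follows essentially the same route as the paper's proof: reorder voters via the recognition algorithm, fill $\dyp$ and $\dyp_2$ in increasing order of $r-\ell$ with $\dyp_2$ computed before $\dyp$ at each range, read off $\dyp[1,n,k]$, and reconstruct via bookkeeping, arriving at the same $O(n^3k^2m)$ bound. The extra details you supply (explicit well-foundedness of the evaluation order and the observation that $k \le m$) are correct refinements of, not departures from, the paper's argument.
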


\begin{proof}
Without loss of generality, assume $\calP$ is \two-crossing with respect to the identity permutation (otherwise, we can use our recognition algorithm to find a witnessing permutation). We compute $\dyp[\ell, r, t]$ and $\dyp_2[\ell, r, t, c_r]$ for all $1 \leq \ell \leq r \leq n$, $1 \leq t \leq k$ and $c_r \in C$ using \eqref{eq_easy} and \eqref{eq_better_rec} in increasing order of $r - \ell$, breaking ties arbitrarily, but only computing $\dyp[\ell, r, t]$ once $\dyp_2[\ell, r, t, c_r]$ has been computed for all $c_r \in C$. For the time complexity, there are $O(n^2k)$ subproblems of the form $\dyp[-, -, -]$, each taking time $O(m)$ to compute, and there are $O(n^2km)$ subproblems of the form $\dyp_2[-, -, -, -]$, each taking time $O(nk)$ to compute. Altogether, this amounts to $O(n^3k^2m)$. So far this is enough to give us the optimal dissatisfaction $\dyp[1, n, k]$. To also retrieve a winning committee, one needs additional bookkeeping for each subproblem, similar to the one described previously; this does not change the overall time complexity.
\end{proof}

We note that for ``egalitarian'' Chamberlin--Courant, where one is instead interested in minimizing the dissatisfaction of the most misrepresented voter, simply replacing `\texttt{+}' with \texttt{max} in the DPs preserves correctness.

\section{Conclusions and Future Work}

We investigated \two-crossing elections, giving an efficient recognition algorithm. Axiomatically, we showed that \two-crossing elections are no different from general elections for any voting rule operating on the (weighted) majority tournament, such as Kemeny and Slater. Subsequently, we considered Young and the Chamberlin--Courant rule. In both cases we gave polynomial algorithms which are applicable in practice. We leave open whether faster solutions exist.

So far, \two-crossing elections have not received the attention they deserve in the social choice literature. It would be interesting to run larger scale experiments to determine to what extent real election data obeys the \two-crossing model, such as by checking for the property in PrefLib \cite{mattei_walsh_preflib}.
Algorithmically, we leave open whether Dodgson winners can be computed in polynomial time under \two-crossing elections. $k$-crossing preferences for $k > 2$ are strictly more expressive than their \two-crossing counterparts, but this comes with an increase in computational complexity; e.g.~computing a winning committee for the Chamberlin--Courant rule, while being in P for \two-crossing elections, becomes NP-hard under three-crossing elections. It would be interesting to establish such tight bounds for other voting rules; e.g.~Young's; and, perhaps most importantly, to establish the hardness of recognizing $k$-crossing elections for $k > 2$. Axiomatically, it would be interesting to determine whether \two-crossing elections admit a forbidden structure characterization, similarly to single-crossing elections \cite{single_crossing_characterization}.

\section*{Acknowledgements}

We thank Edith Elkind for the many useful discussions about two-crossing elections. We thank Costin Andrei Oncescu for the main proof idea of Theorem \ref{theorem_improved_young}.

%% The file named.bst is a bibliography style file for BibTeX 0.99c

%Size constraints for bib: For references and tables, \small is OK, but no smaller than \small, otherwise, the submission may be rejected without review.

\bibliographystyle{named}
{\small
\bibliography{ijcai22}}

\begin{thebibliography}{}

\bibitem[\protect\citeauthoryear{Bartholdi \bgroup \em et al.\egroup
  }{1989}]{kemeny_score_np_hard}
J.~Bartholdi, C.~A. Tovey, and M.~A. Trick.
\newblock Voting schemes for which it can be difficult to tell who won the
  election.
\newblock {\em Social Choice and Welfare}, 6(2):157--165, 1989.

\bibitem[\protect\citeauthoryear{Booth and Lueker}{1976}]{booth_lueker}
Kellogg~S. Booth and George~S. Lueker.
\newblock Testing for the consecutive ones property, interval graphs, and graph
  planarity using pq-tree algorithms.
\newblock {\em Journal of Computer and System Sciences}, 13(3):335--379, 1976.

\bibitem[\protect\citeauthoryear{Brandt \bgroup \em et al.\egroup
  }{2015}]{brandt_weak_young_completeness}
Felix Brandt, Markus Brill, Edith Hemaspaandra, and Lane~A. Hemaspaandra.
\newblock Bypassing combinatorial protections:~polynomial-time algorithms for
  single-peaked electorates.
\newblock {\em J. Artif. Int. Res.}, 53(1):439--496, may 2015.

\bibitem[\protect\citeauthoryear{Brandt \bgroup \em et al.\egroup
  }{2016}]{handbook_unweighted_tournaments_chapter}
Felix Brandt, Markus Brill, and Paul Harrenstein.
\newblock Tournament solutions.
\newblock In Felix Brandt, Vincent Conitzer, Ulle Endriss, J\'{e}r\^{o}me Lang,
  and Ariel~D. Procaccia, editors, {\em Handbook of Computational Social
  Choice}, chapter~3. Cambridge University Press, USA, 1st edition, 2016.

\bibitem[\protect\citeauthoryear{Bredereck \bgroup \em et al.\egroup
  }{2013}]{single_crossing_characterization}
Robert Bredereck, Jiehua Chen, and Gerhard~J. Woeginger.
\newblock A characterization of the single-crossing domain.
\newblock {\em Social Choice and Welfare}, 41(4):989--998, 2013.

\bibitem[\protect\citeauthoryear{Caragiannis \bgroup \em et al.\egroup
  }{2016}]{handbook_young_chapter}
Ioannis Caragiannis, Edith Hemaspaandra, and Lane~A. Hemaspaandra.
\newblock Dodgson’s rule and young’s rule.
\newblock In Felix Brandt, Vincent Conitzer, Ulle Endriss, J\'{e}r\^{o}me Lang,
  and Ariel~D. Procaccia, editors, {\em Handbook of Computational Social
  Choice}, chapter~5. Cambridge University Press, USA, 1st edition, 2016.

\bibitem[\protect\citeauthoryear{Chamberlin and Courant}{1983}]{cc83}
John Chamberlin and Paul Courant.
\newblock Representative deliberations and representative
  decisions:~proportional representation and the borda rule.
\newblock {\em American Political Science Review}, 77(3):718--733, 1983.

\bibitem[\protect\citeauthoryear{Chen \bgroup \em et al.\egroup
  }{2021}]{double_crossing}
Chia-Hui Chen, Junichiro Ishida, and Wing Suen.
\newblock {Signaling under Double-Crossing Preferences}.
\newblock ISER Discussion Paper 1103rr, Institute of Social and Economic
  Research, Osaka University, Oct 2021.

\bibitem[\protect\citeauthoryear{Constantinescu and
  Elkind}{2021}]{constantinescu_elkind_2021}
Andrei~Costin Constantinescu and Edith Elkind.
\newblock Proportional representation under single-crossing preferences
  revisited.
\newblock {\em Proceedings of the AAAI Conference on Artificial Intelligence},
  35(6):5286--5293, May 2021.

\bibitem[\protect\citeauthoryear{Cormen \bgroup \em et al.\egroup
  }{2009}]{cormen}
Thomas~H. Cormen, Charles~E. Leiserson, Ronald~L. Rivest, and Clifford Stein.
\newblock Difference constraints and shortest paths.
\newblock In {\em Introduction to Algorithms, Third Edition}, chapter~24, pages
  664--668. The MIT Press, 3rd edition, 2009.

\bibitem[\protect\citeauthoryear{Elkind \bgroup \em et al.\egroup
  }{2012}]{elkind_reco_sc}
Edith Elkind, Piotr Faliszewski, and Arkadii Slinko.
\newblock Clone structures in voters' preferences.
\newblock {\em Proceedings of the ACM Conference on Electronic Commerce}, pages
  496--513, 2012.

\bibitem[\protect\citeauthoryear{Faliszewski \bgroup \em et al.\egroup
  }{2017}]{mw-chapter}
Piotr Faliszewski, Piotr Skowron, Arkadii Slinko, and Nimrod Talmon.
\newblock Multiwinner voting:~a new challenge for social choice theory.
\newblock In Ulle Endriss, editor, {\em Trends in Computational Social Choice},
  chapter~2, pages 27--47. AI Access, 2017.

\bibitem[\protect\citeauthoryear{Fischer \bgroup \em et al.\egroup
  }{2016}]{handbook_kemeny_chapter}
Felix Fischer, Olivier Hudry, and Rolf Niedermeier.
\newblock Weighted tournament solutions.
\newblock In Felix Brandt, Vincent Conitzer, Ulle Endriss, J\'{e}r\^{o}me Lang,
  and Ariel~D. Procaccia, editors, {\em Handbook of Computational Social
  Choice}, chapter~4. Cambridge University Press, USA, 1st edition, 2016.

\bibitem[\protect\citeauthoryear{Fitzsimmons and
  Hemaspaandra}{2020}]{young_score_np_complete}
Zack Fitzsimmons and Edith Hemaspaandra.
\newblock Election score can be harder than winner.
\newblock In {\em ECAI 2020}, 2020.

\bibitem[\protect\citeauthoryear{Fulkerson and Gross}{1965}]{fulkerson_gross}
Delbert~Ray Fulkerson and Oliver~Alfred Gross.
\newblock Incidence matrices and interval graphs.
\newblock {\em Pacific Journal of Mathematics}, 15:835--855, 1965.

\bibitem[\protect\citeauthoryear{Goldberg \bgroup \em et al.\egroup
  }{1995}]{goldberg}
Paul Goldberg, Martin Golumbic, Haim Kaplan, and Ron Shamir.
\newblock Four strikes against physical mapping of dna.
\newblock {\em Journal of computational biology:~a journal of computational
  molecular cell biology}, 2:139--52, 03 1995.

\bibitem[\protect\citeauthoryear{Goldberg}{1995}]{faster_shortest_paths}
Andrew~V. Goldberg.
\newblock Scaling algorithms for the shortest paths problem.
\newblock {\em SIAM Journal on Computing}, 24(3):494--504, 1995.

\bibitem[\protect\citeauthoryear{Hemaspaandra \bgroup \em et al.\egroup
  }{1997}]{hemaspaandra_theta2_hard_dodgson}
Edith Hemaspaandra, Lane~A. Hemaspaandra, and J\"{o}rg Rothe.
\newblock Exact analysis of dodgson elections:~lewis carroll's 1876 voting
  system is complete for parallel access to np.
\newblock {\em J.~ACM}, 44(6):806--825, Nov 1997.

\bibitem[\protect\citeauthoryear{Hemaspaandra \bgroup \em et al.\egroup
  }{2005}]{kemeny_theta_2_p_hard}
Edith Hemaspaandra, Holger Spakowski, and Jörg Vogel.
\newblock The complexity of kemeny elections.
\newblock {\em Theoretical Computer Science}, 349:382--391, 12 2005.

\bibitem[\protect\citeauthoryear{Hoffman and Kruskal}{2016}]{hoffman_kruskal}
A.~J. Hoffman and J.~B. Kruskal.
\newblock Integral boundary points of convex polyhedra.
\newblock In {\em Linear Inequalities and Related Systems.~(AM-38), Volume 38},
  chapter~13, pages 223--246. Princeton University Press, 2016.

\bibitem[\protect\citeauthoryear{Karmarkar}{1984}]{karmarkar}
Narendra Karmarkar.
\newblock A new polynomial-time algorithm for linear programming.
\newblock In {\em Proceedings of the Sixteenth Annual ACM Symposium on Theory
  of Computing}, STOC'84, pages 302--311, New York, NY, USA, 1984. Association
  for Computing Machinery.

\bibitem[\protect\citeauthoryear{Lu and Boutilier}{2011}]{lb11}
Tyler Lu and Craig Boutilier.
\newblock Budgeted social choice:~from consensus to personalized decision
  making.
\newblock In {\em Proceedings of IJCAI'11}, pages 280--286, 01 2011.

\bibitem[\protect\citeauthoryear{Mattei and Walsh}{2013}]{mattei_walsh_preflib}
Nicholas Mattei and Toby Walsh.
\newblock Preflib:~a library for preferences http://www.preflib.org.
\newblock In Patrice Perny, Marc Pirlot, and Alexis Tsouki{\`a}s, editors, {\em
  Algorithmic Decision Theory}, pages 259--270, Berlin, Heidelberg, 2013.
  Springer Berlin Heidelberg.

\bibitem[\protect\citeauthoryear{Mirrlees}{1971}]{mirrlees_income_tax}
James Mirrlees.
\newblock An exploration in the theory of optimum income taxation.
\newblock {\em The Review of Economic Studies}, 38(2):175--208, 1971.

\bibitem[\protect\citeauthoryear{Misra \bgroup \em et al.\egroup
  }{2017}]{cc_np_hard_3_crossing}
Neeldhara Misra, Chinmay Sonar, and P.~R. Vaidyanathan.
\newblock On the complexity of chamberlin--courant on almost structured
  profiles.
\newblock In J{\"o}rg Rothe, editor, {\em Algorithmic Decision Theory}, pages
  124--138, Cham, 2017. Springer International Publishing.

\bibitem[\protect\citeauthoryear{Peters and Lackner}{2020}]{dominik_spoc}
Dominik Peters and Martin Lackner.
\newblock Preferences single-peaked on a circle.
\newblock {\em Journal of Artificial Intelligence Research}, 68:463--502, 06
  2020.

\bibitem[\protect\citeauthoryear{Procaccia \bgroup \em et al.\egroup
  }{2008}]{prz08}
Ariel Procaccia, Jeffrey Rosenschein, and Aviv Zohar.
\newblock On the complexity of achieving proportional representation.
\newblock {\em Social Choice and Welfare}, 30:353--362, 02 2008.

\bibitem[\protect\citeauthoryear{Roberts}{1977}]{roberts_single_crossing}
Kevin Roberts.
\newblock Voting over income tax schedules.
\newblock {\em Journal of Public Economics}, 8(3):329--340, 1977.

\bibitem[\protect\citeauthoryear{Rothe \bgroup \em et al.\egroup
  }{2003}]{rothe_young_completeness}
J{\"o}rg Rothe, Holger Spakowski, and J{\"o}rg Vogel.
\newblock Exact complexity of the winner problem for young elections.
\newblock {\em Theory of Computing Systems}, 36:375--386, 2003.

\bibitem[\protect\citeauthoryear{Skowron \bgroup \em et al.\egroup
  }{2015}]{cc_sc_poly_time_edith}
Piotr Skowron, Lan Yu, Piotr Faliszewski, and Edith Elkind.
\newblock The complexity of fully proportional representation for
  single-crossing electorates.
\newblock {\em Theoretical Computer Science}, 569:43--57, 2015.

\bibitem[\protect\citeauthoryear{Tucker}{1971}]{tucker}
Alan~C. Tucker.
\newblock Matrix characterizations of circular-arc graphs.
\newblock {\em Pacific Journal of Mathematics}, 39:535--545, 1971.

\end{thebibliography}

\end{document}